\documentclass[11pt]{article}

\usepackage[a4paper,margin=1in]{geometry}
\usepackage[utf8]{inputenc}
\usepackage[T1]{fontenc}
\usepackage{lmodern}
\usepackage{microtype}
\usepackage{amsmath,amssymb,amsfonts,amsthm,mathtools}
\usepackage{bm}
\usepackage{booktabs}
\usepackage{array}
\usepackage{enumitem}
\usepackage[hidelinks]{hyperref}
\usepackage{tikz}
\usetikzlibrary{arrows.meta,positioning,automata,shapes.geometric}
\newtheorem{theorem}{Theorem}
\newtheorem{lemma}{Lemma}
\newtheorem{proposition}{Proposition}
\newtheorem{corollary}{Corollary}
\theoremstyle{definition}
\newtheorem{definition}{Definition}

\newtheorem{example}{Example}
\newtheorem{procedure}{Procedure}
\theoremstyle{remark}

\DeclareMathOperator{\cone}{cone}

\DeclareMathOperator{\sign}{sign}

\title{\bfseries Cone-Induced Observation Congruences\\
for Vector-Valued Quantitative Languages}

\author{
Faruk Alpay$^{1,*}$ \and Baris Başaran$^{1}$\\[2mm]
{\small $^{1}$Department of Computer Engineering, Bahcesehir University, Istanbul, Turkey}\\
{\small faruk.alpay@bahcesehir.edu.tr; baris.basaran@bahcesehir.edu.tr}\\
{\small $^{*}$Correspondence: alpay@lightcap.ai.}
}
\date{}

\begin{document}
\maketitle

\begin{abstract}
We study the observation congruences induced by rational polyhedral cones on vector-valued quantitative languages. The extreme rays of the dual cone define intrinsic covectors, and these covectors classify every incremental residual future by a finite sign cell: negative, tight, or positive along each extremal Farkas direction. The resulting carrier is the right-stable carrier of this cone-induced observation family, whose source is canonical: the restricted covector geometry of the order cone on the residual span of the language. We organize this construction through an observation-refinement correspondence, a cone-refinement calculus, and a separation between the qualitative conic observation quotient and the numerical residual carrier needed for potential certificates. A bounded-horizon fragment is fully computable by enumeration of accumulated futures, and reproducible evaluation runs show how the conic layer detects qualitative obstruction cells before numerical refinement.
\end{abstract}

\noindent\textbf{Keywords:} weighted automata; quantitative languages; rational cones; syntactic congruence; residual stratification; Farkas witnesses.

\section{Introduction}

A vector-valued quantitative language can be minimized in several different ways. The ordinary residual quotient keeps equality of all vector futures. A numerical verification procedure may instead keep scalar values needed by potentials, shortest paths, or terminal offsets. A cone order suggests a third layer: it asks which extremal dual witnesses of the cone see a residual future as negative, tight, or positive. This paper studies the right congruence generated by that cone-induced covector observation.

Let
\[
        h:\Sigma^\ast\to\mathbb Q^d
\]
be a vector-valued quantitative language. For a prefix $u$, its incremental future is
\[
        \partial_u h(z)=h(uz)-h(u),\qquad z\in\Sigma^\ast .
\]
If $K$ is a rational polyhedral comparison cone, the extreme rays of $K^\ast$ define a central hyperplane arrangement
\[
        r(x)=0,\qquad r\in\mathcal E(K^\ast),
\]
and a covector map
\[
        \kappa_K(x)=\bigl(\operatorname{sign}(r(x))\bigr)_{r\in\mathcal E(K^\ast)}.
\]
Thus the residual orbit of $h$ is stratified by extremal Farkas sign cells. The carrier studied here is the right-action carrier of this cone-induced stratification under $\Sigma^\ast$.

The mathematical organization is the cone-to-observation map: the cone supplies a canonical covector family; restrictions of this family to the residual span produce a presentation-independent sign configuration; refinements of the dual covector arrangement induce quotient morphisms; and the resulting qualitative carrier is separated from the magnitude-sensitive numerical carrier used by exact verification algorithms.

This distinction is important in verification. A negative conic cell gives an extremal Farkas witness that a residual future leaves the cone. A positive conic cell does not determine the size of the residual or the feasibility of a shortest-path potential. The conic carrier is therefore useful as a structural abstraction and, in bounded horizons, as a computable pre-analysis. Exact numerical inclusion remains a refinement layer.

The paper is deliberately finite-word and deterministic in its main body. The unbounded construction is a characterization unless a future-separation procedure is available, while the bounded-horizon fragment is executable by enumerating accumulated residual futures. Conditional spectral material is treated as a supplementary note.

\subsection{Running example: a cone-induced covector split}

Consider the wedge cone
\[
        K_\triangle=\{(x_1,x_2):x_2-x_1\ge0,\ x_2+x_1\ge0\},
\]
whose two extreme dual rays may be represented by
\[
        r^- =(-1,1),\qquad r^+=(1,1).
\]
The covector of a residual vector $x$ is
\[
        \kappa_{K_\triangle}(x)=\bigl(\sign(r^-(x)),\sign(r^+(x))\bigr).
\]
The following two-state carrier illustrates the distinction between value equality, conic cell information, and numerical verification data:

\begin{center}
\begin{tikzpicture}[->,>=Stealth,node distance=38mm,semithick]
  \node[state,initial] (p) {$p$};
  \node[state,right=of p] (q) {$q$};
  \path (p) edge[loop above] node {$a:(1,2)$} (p);
  \path (q) edge[loop above] node {$a:(2,4)$} (q);
  \path (p) edge[bend left=15] node[above] {$b:(4,2)$} (q);
  \path (q) edge[bend left=15] node[below] {$b:(4,2)$} (p);
\end{tikzpicture}
\end{center}

The two $a$-loops have different vector values, but both lie in the open cell $(+,+)$ because
\[
        r^-(1,2)=1,\quad r^+(1,2)=3,
        \qquad
        r^-(2,4)=2,\quad r^+(2,4)=6.
\]
Thus a conic covector carrier may identify these two positive-cell futures even though a numerical residual carrier cannot. The $b$-edge has profile $(-,+)$ because $r^-(4,2)=-2$ and $r^+(4,2)=6$, so it is a qualitative Farkas obstruction. This example also shows the limitation: the conic carrier detects the negative cell on $b$, but it does not decide shortest-path distances among positive-cell behaviours. Numerical potential checks remain a separate layer.

\subsection{A four-block covector carrier}

A slightly larger finite example shows that the quotient need not collapse to one or two states. Again take $K_\triangle$ with rays $r^-=(-1,1)$ and $r^+=(1,1)$. Consider four deterministic states $s_{++},s_{0+},s_{-+},s_{+0}$ with a single self-loop letter $a$ and zero terminal values. The loop labels are
\[
\begin{array}{c|c|c}
\text{state} & \text{loop label} & (\sign r^-(x),\sign r^+(x))\\
\hline
s_{++} & (1,2) & (+,+)\\
s_{0+} & (2,2) & (0,+)\\
s_{-+} & (4,2) & (-,+)\\
s_{+0} & (-2,2) & (+,0).
\end{array}
\]
Since every state is a loop state, the residual future of $a^n$ repeats the same cell for each state. The conic observation carrier therefore has four blocks. This is still a small example, but it demonstrates a nontrivial covector alphabet: the quotient is not merely detecting violation versus non-violation; it distinguishes negative, tight, and positive extremal Farkas behaviour. If the labels $(1,2)$ and $(2,4)$ are both used instead of $(1,2)$ and $(2,2)$, they remain in the same $(+,+)$ cell and are merged by the conic carrier while remaining distinct in the numerical carrier.

\subsection{Contributions}

The paper's contribution is a conic observation calculus rather than a new general minimization theorem. Once an observation family is fixed, the carrier construction is classical; what is specific here is that the observation family is generated by the dual geometry of the comparison cone and that cone-geometric refinements induce systematic operations on congruences.

\begin{enumerate}[label=(C\arabic*)]
\item \textbf{Cone-induced covector observations.} We define the intrinsic covector map $\kappa_K$ induced by the extreme rays of $K^\ast$ and apply it to every incremental residual future of a vector-valued quantitative language.
\item \textbf{Lineality-first reduction.} We show that the conic observation quotient is formed after quotienting by the lineality space $L_K$; order-neutral residual directions are invisible to every extreme dual witness.
\item \textbf{Observation-refinement correspondence.} We prove an order-theoretic correspondence between right congruences and the dual covectors whose residual sign languages they preserve. This identifies exactly how much dual geometry is visible at a given quotient.
\item \textbf{Cone-refinement calculus.} We show that refinement of the extreme-dual sign arrangement refines the conic congruence and induces canonical quotient morphisms. Union of covector families becomes meet of congruences.
\item \textbf{Separation from numerical verification.} We distinguish the conic cell carrier from the numerical residual carrier needed for potential certificates; the former records extremal obstruction cells, while the latter preserves magnitudes.
\item \textbf{Computable bounded-horizon fragment.} We give a finite, executable bounded-horizon construction and a worked run showing how accumulated residual signs, not edge-local signs, determine the quotient. The unbounded construction is presented as a characterization unless a separation procedure is available.
\end{enumerate}

\paragraph{What is new.}
The Myhill--Nerode part of the construction is deliberately not presented as new: after an observation family has been fixed, the coarsest right congruence preserving it is the classical object. The new content is the cone-induced source and organization of the observations. The extreme dual rays of $K$ supply the observation coordinates, lineality passes first to $V/L_K$, residual futures restrict the sign arrangement to the actually reachable span, and cone-refinement becomes a comparison principle for the resulting quotients. Propositions that identify row kernels or factorization properties should be read as structural packaging of this observation family; the cone-specific claims are the lineality reduction, the restricted covector invariance, the observable-covector closure, the refinement morphisms, and the bounded accumulated-future construction.

\section{Related Work and Scope}

Recognizable series and weighted automata provide the algebraic basis for quantitative languages \cite{schutzenberger1961,eilenberg1974,berstel2011,droste2009}. Weighted MSO and weighted automata equivalence provide logical denotations \cite{droste2007,droste2019}, while multi-weighted automata extend the denotational layer to several weights \cite{droste2016multi}. Quantitative languages study numerical comparison, inclusion, and equivalence \cite{chatterjee2010}; cost-register automata and regular cost functions study word-to-value computations through registers \cite{alur2013}.

The present construction is related to syntactic congruence theory \cite{eilenberg1974,pin2021}, but its observation family is not arbitrary. It is induced by the extreme-dual covectors of the cone and then restricted to the residual span of the language. Classical syntactic congruences ask which prefixes preserve a chosen set of languages or observations. Here the observations are the extremal Farkas sign cells of residual futures. The resulting carrier can therefore be read as the syntactic carrier of a cone-induced finite covector alphabet.

The work is also related to abstract interpretation and sign abstraction \cite{cousot1977,cousot2021}. The abstraction map is a sign map, but it is not a coordinate sign domain chosen independently of the semantics. Its coordinates are the extreme rays of $K^\ast$, so the abstraction is invariant under positive rescaling and relabeling of those rays. The restricted sign configuration can also be described using the standard language of realizable oriented matroids \cite{bjorner1999}; this is used only as a concise way to say that deletion, restriction, and contraction are presentation-independent operations on the induced sign arrangement. Adding redundant non-extreme inequalities changes a presentation but not the presentation-independent extreme-ray carrier unless additional hyperplanes are intentionally retained as extra observations.

Difference constraints, potentials, negative cycles, and Farkas separation remain the exact scalar certificate tools \cite{bellman1958,ford1956,farkas1902}. Quantitative simulations by matrices provide linear witnesses for weighted inclusion \cite{urabe2017}. The conic carrier does not encode scalar magnitudes and therefore does not replace these mechanisms. It gives a qualitative layer that may expose extremal obstruction cells or collapse vector distinctions before numerical refinement.

Multi-objective verification and multidimensional energy or mean-payoff objectives study vector criteria in probabilistic, nondeterministic, or game settings \cite{etessami2008,forejt2011,chatterjee2012}. Those settings usually include nondeterminism, probability, adversarial strategy, or infinite-horizon objectives. The main results here are for deterministic finite-word systems; the infinite-word discussion is a scope boundary.

\section{Cones, Lineality, and Intrinsic Observations}

Let $V=\mathbb Q^d$. A rational polyhedral cone $K\subseteq V$ has lineality space
\[
        L_K=K\cap(-K).
\]
The induced preorder identifies vectors up to $L_K$: if $x-y\in L_K$, then $x\le_K y$ and $y\le_K x$. Geometrically, lineality directions are order-neutral. Any residual difference that remains inside $L_K$ under all continuations is invisible to every conic obstruction and is therefore collapsed by the conic quotient. Thus non-pointed cones do not merely add a technical nuisance; they specify directions that the preorder itself declares indistinguishable. The dual cone
\[
        K^\ast=\{\lambda\in V^\ast: \lambda(x)\ge0\text{ for all }x\in K\}
\]
annihilates $L_K$.

Let $\mathcal E(K^\ast)$ denote the set of extreme rays of $K^\ast$. For a rational polyhedral cone this set is finite. Choose one nonzero primitive rational generator $r$ for each extreme ray. The signs of $r(x)$ are independent of the scaling of $r$.

\begin{definition}[Conic obstruction cell]
For $x\in V$, define the intrinsic conic sign profile
\[
        \sigma_K(x)=\bigl(\sign(r(x))\bigr)_{r\in\mathcal E(K^\ast)}
        \in\{-1,0,+1\}^{\mathcal E(K^\ast)}.
\]
The conic obstruction cell of $x$ is the cell of the hyperplane arrangement
\[
        r(x)=0,\qquad r\in\mathcal E(K^\ast),
\]
containing $x$. Equivalently, it is the sign profile $\sigma_K(x)$.
\end{definition}

\begin{definition}[Intrinsic conic covector map]
The intrinsic conic covector map of $K$ is
\[
        \kappa_K:V\to\{-1,0,+1\}^{\mathcal E(K^\ast)},
        \qquad
        \kappa_K(x)=\sigma_K(x).
\]
For a vector-valued language $h$, the induced residual covector stratification is the two-variable map
\[
        \mathcal S_K(h)(u,z)=\kappa_K(\partial_u h(z)).
\]
The conic observation quotient studied below is the right-stable quotient induced by this stratification.
\end{definition}

\begin{definition}[Restricted covector configuration]
Let
\[
        W_h=\operatorname{span}_{\mathbb Q}\{\partial_u h(z):u,z\in\Sigma^\ast\}
\]
be the residual span of $h$. The restricted conic covector configuration is the list of nonzero restrictions
\[
        \mathcal R_K(h)=\{r|_{W_h}:r\in\mathcal E(K^\ast),\ r|_{W_h}\ne0\},
\]
taken up to positive scaling and relabeling. Its sign vectors
\[
        \bigl(\sign(r(x))\bigr)_{r\in\mathcal R_K(h)},\qquad x\in W_h,
\]
form a realizable central sign configuration; equivalently, they are the covectors of a realizable oriented matroid in the standard sense. No oriented-matroid theorem is used below. The point of this language is only that deleting a supplied covector, restricting from $V$ to $W_h$, and contracting lineality have their usual sign-configuration meanings. The conic carrier depends only on this restricted sign configuration together with the residual right action.
\end{definition}

\begin{proposition}[Invariance under restricted covector isomorphism]
Suppose two cones $K_1,K_2$ induce isomorphic restricted covector configurations on $W_h$, with a fixed isomorphism preserving sign vectors up to relabeling of covectors. Then the corresponding conic congruences are equal after applying that relabeling, and the carriers $Q_{K_1}(h)$ and $Q_{K_2}(h)$ are isomorphic, relative to the chosen relabeling, as right-action carriers with cell observations.
\end{proposition}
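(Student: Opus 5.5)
The plan is to reduce the statement to the observation that the conic congruence is defined entirely through the function $(u,z)\mapsto\kappa_K(\partial_u h(z))$, and that this function factors through the restricted configuration. I will take the conic congruence to be the coarsest right congruence generated by the observation family:
\[
u\equiv_K v \iff \kappa_K(\partial_u h(z))=\kappa_K(\partial_v h(z))\ \text{ for all } z\in\Sigma^\ast .
\]
This matches the right-stable quotient of $\mathcal S_K(h)$. Right stability is automatic, because $\partial_{uw}h(z)=\partial_u h(wz)-\partial_u h(w)$. It is enough for the statement to see that $\equiv_K$ is determined by the map $\Phi_K:(u,z)\mapsto \kappa_K(\partial_u h(z))$ up to a relabeling of output coordinates that is applied uniformly to every pair $(u,z)$.

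\textbf{Reduce from $V$ to $W_h$.} Every residual $\partial_u h(z)$ lies in $W_h$. So for each $r\in\mathcal E(K^\ast)$, $\sign r(\partial_u h(z))=\sign r|_{W_h}(\partial_u h(z))$. Coordinates with $r|_{W_h}=0$ are identically $0$ on all residuals, so they carry no information. In particular they never separate two prefixes, and this is also where lineality is absorbed, since $K^\ast$ annihilates $L_K$. Two further facts matter:
\begin{itemize}
\item Rays with proportional positive restrictions give identical sign coordinates.
\item Positive rescaling never changes a sign.
\end{itemize}
Hence $\Phi_K(u,z)$ is recovered, via a fixed injective and $z$-independent recoding, from $\Psi_K(u,z)=(\sign \rho(\partial_u h(z)))_{\rho\in\mathcal R_K(h)}$. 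Conversely, $\Psi_K$ is obtained from $\Phi_K$ by deleting coordinates. Therefore $\Phi_K(u,z)=\Phi_K(v,z')$ if and only if $\Psi_K(u,z)=\Psi_K(v,z')$, and $\equiv_K$ equals the congruence defined with $\Psi_K$ in place of $\Phi_K$.

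\textbf{Transport along the isomorphism.} The hypothesis gives a bijection $\pi:\mathcal R_{K_1}(h)\to\mathcal R_{K_2}(h)$. Under $\pi$, the sign vector of each $x\in W_h$ for $K_1$ corresponds coordinatewise to its sign vector for $K_2$. I read "preserving sign vectors" as: for every $x\in W_h$ and every $\rho$, $\sign\rho(x)=\sign\pi(\rho)(x)$.

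This reading is where I expect the main obstacle. An abstract isomorphism of covector sets, meaning only a bijection between the two sets of sign vectors, would not suffice by itself. We need the correspondence to be pointwise on $W_h$, that is, compatible with the identity map of $W_h$. If the isomorphism were only abstract, one would try to upgrade it via realizability, but I would instead state explicitly that the isomorphism is induced on the points of $W_h$. I expect this is what the phrase "on $W_h$ … with a fixed isomorphism" intends.

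With that reading, $\Psi_{K_2}(u,z)=\pi_\ast\Psi_{K_1}(u,z)$ for all $u,z$, where $\pi_\ast$ permutes coordinates. So the defining conditions of $\equiv_{K_1}$ and $\equiv_{K_2}$ coincide, and the congruences are equal.

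\textbf{Conclude for the carriers.} Since $\equiv_{K_1}=\equiv_{K_2}$ as relations on $\Sigma^\ast$, the identity on $\Sigma^\ast$ descends to a bijection $Q_{K_1}(h)\to Q_{K_2}(h)$, $[u]\mapsto[u]$. This bijection commutes with the right action $[u]\cdot a=[ua]$ because the action is defined on representatives. It also carries the cell observation of a class, $z\mapsto\Psi_{K_1}(u,z)$, to $\pi_\ast$ of it, which is exactly the cell observation for $K_2$ after relabeling. That yields the claimed isomorphism of right-action carriers with cell observations, relative to $\pi$.
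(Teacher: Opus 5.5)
Your transport argument is the paper's own proof, spelled out in more detail. Your pointwise reading of the isomorphism is exactly what that proof tacitly needs. An abstract isomorphism of covector sets would not suffice: the coordinate functionals and the pair $x_1+x_2$, $x_1-x_2$ on $\mathbb Q^2$ realize the same nine sign vectors but cut $\mathbb Q^2$ into different cells.

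The genuine error is in your setup. You define $u\equiv_K v$ using only $\kappa_K(\partial_u h(z))=\kappa_K(\partial_v h(z))$ for all $z$. You then assert that right stability is automatic because $\partial_{uw}h(z)=\partial_u h(wz)-\partial_u h(w)$. That identity transfers equality of vector values, but not equality of signs: the signs of two residuals do not determine the sign of their difference.

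Concretely, take $d=1$ and $K=\mathbb Q_{\ge0}$. Choose $h$ with $\partial_u h(a)=1$, $\partial_u h(aa)=3$, $\partial_v h(a)=3$, $\partial_v h(aa)=1$, and all other nonempty futures of $u$ and $v$ positive; this is easily realized by a small deterministic carrier. Your relation identifies $u$ and $v$. Yet $\partial_{ua}h(a)=2>0$ while $\partial_{va}h(a)=-2<0$, so it does not identify $ua$ and $va$. Thus your relation is not a right congruence, and it is not the paper's $\equiv_K$, which quantifies over right contexts $x$ and separates $u,v$ here. In particular, the action $[u]\cdot a=[ua]$ in your last paragraph is not well defined on its classes.

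The repair costs nothing: use the paper's definition, $\sigma_K(\partial_{ux}h(z))=\sigma_K(\partial_{vx}h(z))$ for all $x,z$. Your identity $\Psi_{K_2}(w,z)=\pi_\ast\Psi_{K_1}(w,z)$ was proved for every prefix $w$, so it applies with $w=ux$ and the two defining conditions coincide. The paper's right-congruence lemma then makes $T_a$ well defined, and your carrier paragraph goes through unchanged. Equivalently, write $\sim_K$ for your $z$-only relation. Then $u\equiv_K v$ holds iff $ux\sim_K vx$ for all $x$, so the equality $\sim_{K_1}=\sim_{K_2}$ that you did establish already implies $\equiv_{K_1}=\equiv_{K_2}$.
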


\begin{proof}
The defining data of $u\equiv_K v$ is the sign vector of every residual future $\partial_{ux}h(z)$ under the restricted covectors. If the restricted sign configurations are isomorphic, equality of all sign vectors for $K_1$ is equivalent to equality of all sign vectors for $K_2$. Hence the equivalence relations coincide up to relabeling of observation coordinates, and the quotient carriers are isomorphic.
\end{proof}

The profile is intrinsic in the presentation-independent sense used throughout: it is unchanged by positive rescaling of extreme-ray generators and by relabeling their order. If $K$ is not pointed, directions in $L_K$ are invisible to all $r\in K^\ast$.

\begin{proposition}[Lineality reduction]
Let $\bar V=V/L_K$, let $\pi:V\to\bar V$ be the quotient map, and let $\bar K=\pi(K)$. Then $\bar K$ is pointed and
\[
        K^\ast \cong \bar K^\ast
\]
under pullback along $\pi$. Consequently,
\[
        \sigma_K(x)=\sigma_{\bar K}(\pi x)
\]
up to relabeling of extreme rays, and for every vector series $h$,
\[
        u\equiv_K v
        \quad\Longleftrightarrow\quad
        u\equiv_{\bar K} v\text{ for the quotient series }\pi\circ h.
\]
Thus all lineality directions are collapsed before any conic observation carrier is formed.
\end{proposition}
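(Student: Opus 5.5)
The plan is to handle the linear-algebraic core first, then the sign profiles, and finally the congruence statement.

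\emph{Step 1: $\bar K$ is pointed.} Suppose $\pi(y)\in\bar K\cap(-\bar K)$. Then $\pi(y)=\pi(k_1)$ and $-\pi(y)=\pi(k_2)$ with $k_1,k_2\in K$. Hence $k_1+k_2\in L_K$, so $-k_1=k_2-(k_1+k_2)\in K+L_K=K$. Here $K+L_K=K$ because $L_K\subseteq K$ and $K$ is a convex cone. Thus $k_1\in K\cap(-K)=L_K$, and $\pi(y)=0$.

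\emph{Step 2: $K^\ast\cong\bar K^\ast$ via pullback.} Every $\lambda\in K^\ast$ vanishes on $L_K$, since it is $\ge0$ on both $\ell$ and $-\ell$. So $\lambda$ factors uniquely as $\lambda=\bar\lambda\circ\pi$ with $\bar\lambda\in\bar V^\ast$. Nonnegativity on $K$ is equivalent to nonnegativity on $\pi(K)=\bar K$. Conversely, any $\mu\in\bar K^\ast$ pulls back to $\mu\circ\pi\in K^\ast$. The map $\mu\mapsto\mu\circ\pi$ is therefore a linear bijection $\bar K^\ast\to K^\ast$; it is injective because $\pi$ is surjective.

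A linear isomorphism of cones carries extreme rays to extreme rays, since it preserves the face lattice. It also preserves rationality, because $L_K$ is a rational subspace and $\pi$ can be chosen rational. So $\mathcal E(K^\ast)$ and $\mathcal E(\bar K^\ast)$ correspond bijectively, with $r=\bar r\circ\pi$. Primitive generators may differ by positive scalars, and signs ignore such scalars.

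\emph{Step 3: sign profiles and congruences.} For each paired ray we have $\sign r(x)=\sign\bar r(\pi x)$, which gives $\sigma_K(x)=\sigma_{\bar K}(\pi x)$ up to the induced relabeling. By linearity of $\pi$,
\[
\pi\bigl(\partial_u h(z)\bigr)=\pi h(uz)-\pi h(u)=\partial_u(\pi\circ h)(z).
\]
So the stratification satisfies $\mathcal S_K(h)(u,z)=\mathcal S_{\bar K}(\pi\circ h)(u,z)$ for all $u,z$, again up to the fixed relabeling. The conic congruence $u\equiv_K v$ is defined as the right-stable relation asking that $\mathcal S_K(h)(ux,z)=\mathcal S_K(h)(vx,z)$ for all $x,z$. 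Since both defining families of observations coincide pointwise, the two relations coincide. This is essentially the same argument as the preceding Invariance Proposition, applied with the identity on prefixes and the bijection of rays as the relabeling.

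\emph{Main obstacle.} The only genuinely delicate step is the correspondence of extreme rays in Step 2, together with the rationality and primitive-generator bookkeeping. I would treat this step with the following observations:
\begin{itemize}
\item A linear isomorphism maps faces to faces, so it maps one-dimensional faces (extreme rays) to one-dimensional faces.
\item Pointedness of $\bar K$ is what makes $\bar K^\ast$ full-dimensional in $\bar V^\ast$. This is not strictly needed for the bijection, but it reassures that nothing degenerates.
\item When $K$ itself is a subspace, $K^\ast=L_K^\perp$ is a subspace. The extreme-ray set must then be read with the paper's convention; presumably it is empty, in which case both profiles are the empty tuple and the claim holds trivially.
\end{itemize}
I would flag this degenerate case explicitly rather than let it pass silently.
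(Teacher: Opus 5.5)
Your proof is correct and follows the same route as the paper's: each functional in $K^\ast$ vanishes on $L_K$ and so factors through $\bar V$, pullback identifies $K^\ast$ with $\bar K^\ast$ and their extreme rays, and the ray-by-ray sign equality is applied to every residual via $\pi\circ\partial_u h=\partial_u(\pi\circ h)$. Your Step~1 proof that $\bar K$ is pointed and your face-lattice justification of the extreme-ray correspondence fill in details that the paper's proof leaves implicit.
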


\begin{proof}
Every functional in $K^\ast$ vanishes on $L_K$, so it factors uniquely through $\bar V$. Conversely, every nonnegative functional on $\bar K$ pulls back to a nonnegative functional on $K$. This identifies the dual cones and their extreme rays up to scaling. The equality of sign profiles follows immediately. Applying the same equality to every incremental residual $\partial_u h(z)$ gives the congruence statement.
\end{proof}

\begin{corollary}[Residual lineality collapse]
If two prefixes $u,v$ satisfy
\[
        \partial_{ux}h(z)-\partial_{vx}h(z)\in L_K
        \qquad\text{for every }x,z\in\Sigma^\ast,
\]
then $u\equiv_K v$. Equivalently, the conic observation quotient depends only on the residual futures after applying the quotient map $\pi:V\to V/L_K$.
\end{corollary}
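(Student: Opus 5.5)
The plan is to reduce the corollary to the Lineality reduction proposition together with the definition of $\equiv_K$ as the right-stable quotient of the stratification $\mathcal S_K(h)$. As in the proof of the invariance proposition, I take $u\equiv_K v$ to mean that for all $x,z\in\Sigma^\ast$ the sign profiles $\kappa_K(\partial_{ux}h(z))$ and $\kappa_K(\partial_{vx}h(z))$ agree. The first step is to observe that every $r\in\mathcal E(K^\ast)$ is a nonnegative functional on $K$. If $y\in L_K=K\cap(-K)$, then $r(y)\ge 0$ and $r(-y)\ge0$, so $r(y)=0$. Hence each extreme dual covector is constant on cosets of $L_K$. Equivalently, $\sigma_K(x)=\sigma_{\bar K}(\pi x)$, which is exactly the identity supplied by the Lineality reduction proposition.

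The second step applies this pointwise. Fix $x,z\in\Sigma^\ast$ and set $a=\partial_{ux}h(z)$ and $b=\partial_{vx}h(z)$. The hypothesis gives $a-b\in L_K$. Linearity then yields $r(a)=r(b)+r(a-b)=r(b)$ for every $r\in\mathcal E(K^\ast)$, so the two sign vectors coincide: $\kappa_K(a)=\kappa_K(b)$. Since $x$ and $z$ were arbitrary, all residual observations of $ux$ and $vx$ agree for every extension $x$. This is precisely the defining condition of $u\equiv_K v$.

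For the ``equivalently'' clause, I will invoke the congruence half of the Lineality reduction proposition: $u\equiv_K v$ holds iff $u\equiv_{\bar K} v$ for $\pi\circ h$. By linearity of $\pi$, we have $\partial_u(\pi\circ h)(z)=\pi(\partial_u h(z))$. So the quotient is computed entirely from the projected residual futures $\pi(\partial_{ux}h(z))$, and the hypothesis of the corollary is just the statement that these projections agree.

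The only point needing care, rather than a real obstacle, is confirming that the hypothesis is quantified over all right extensions $x$, and not only over $u$ and $v$ themselves. This is what makes the argument match the right-stable definition of $\equiv_K$ without any separate closure argument under the $\Sigma^\ast$-action.
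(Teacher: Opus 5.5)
Your proposal is correct and follows the paper's argument: every extreme dual ray vanishes on $L_K$ (you supply the one-line reason via $r(y)\ge0$ and $r(-y)\ge0$), so the two residual futures have identical extreme-ray signs for every $x,z$, and the ``equivalently'' clause is the Lineality reduction applied to all residual futures. Your explicit check that $\partial_u(\pi\circ h)(z)=\pi(\partial_u h(z))$ by linearity of $\pi$ is a small detail the paper leaves implicit.
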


\begin{proof}
Every extreme dual ray vanishes on $L_K$, so the two residual futures have the same extreme-ray signs for every right context and continuation. The last statement is the lineality reduction applied to all residual futures.
\end{proof}
We use the phrase extreme-dual-ray arrangement in this precise sense: the hyperplane arrangement $r(x)=0$ induced by the extreme rays of $K^\ast$ is the central arrangement that records which extremal dual faces can be active or negative on a residual vector. The construction uses this cone-induced extreme-ray sign arrangement, not an arbitrary coordinate sign abstraction.

\begin{definition}[Negative and active extreme faces]
For $x\in V$, define
\[
        E_-(x)=\{r\in\mathcal E(K^\ast): r(x)<0\},
        \qquad
        E_0(x)=\{r\in\mathcal E(K^\ast): r(x)=0\}.
\]
The negative extreme cone is $\cone(E_-(x))$, and the active extreme cone is $\cone(E_0(x))$. These cones encode which extremal Farkas directions can witness violation and which extremal directions are tight.
\end{definition}

For a complete finite inequality representation, this reduces to a sign pattern over those inequalities only when the inequalities are exactly the extreme rays of $K^\ast$ up to positive scaling. Redundant inequalities may refine the representation, but not the intrinsic conic cell.

\section{Conic Covector-Congruence}

Let $h:\Sigma^\ast\to V$ be a vector-valued quantitative language. For a prefix $u$, recall
\[
        \partial_u h(z)=h(uz)-h(u).
\]
The following relation is algebraically a genuine right congruence. It is not intended to be weaker than a congruence; the adjective covector indicates that the preserved observations are conic covector cells rather than full vector-valued residuals.

\begin{definition}[Residual future covector languages]
For a prefix $u$ and a conic sign profile $\pi\in\{-1,0,+1\}^{\mathcal E(K^\ast)}$, define the residual future facial language
\[
        F_{u,\pi}=\{z\in\Sigma^\ast: \sigma_K(\partial_u h(z))=\pi\}.
\]
For a right context $x\in\Sigma^\ast$, $F_{ux,\pi}$ is the same language after the prefix has first been moved to $ux$.
\end{definition}

\begin{definition}[Conic covector congruence]
For $u,v\in\Sigma^\ast$, define
\[
        u\equiv_K v
        \quad\Longleftrightarrow\quad
        F_{ux,\pi}=F_{vx,\pi}
        \quad\text{for every }x\in\Sigma^\ast\text{ and every profile }\pi .
\]
Equivalently,
\[
        \sigma_K(\partial_{ux} h(z))=
        \sigma_K(\partial_{vx} h(z))
        \quad\text{for every }x,z\in\Sigma^\ast .
\]
The quotient $Q_K(h)=\Sigma^\ast/\equiv_K$ is the conic residual quotient of $h$.
\end{definition}

\begin{lemma}[Right congruence]
The relation $\equiv_K$ is a right congruence: if $u\equiv_K v$, then $ua\equiv_K va$ for every $a\in\Sigma$.
\end{lemma}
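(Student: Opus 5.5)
The plan is to unfold the definition of $\equiv_K$ and absorb the extra letter into the right context, which the definition already quantifies over universally. First we check that $\equiv_K$ is an equivalence relation. It is defined as equality of the family of sign profiles $\bigl(\sigma_K(\partial_{ux}h(z))\bigr)_{x,z\in\Sigma^\ast}$ attached to $u$. Equivalently, $u\equiv_K v$ says that two functions $\Sigma^\ast\times\Sigma^\ast\to\{-1,0,+1\}^{\mathcal E(K^\ast)}$ coincide, so it is the kernel of a map and hence reflexive, symmetric and transitive.

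For the congruence property, assume $u\equiv_K v$ and fix $a\in\Sigma$. We must show that
\[
        \sigma_K(\partial_{uax}h(z))=\sigma_K(\partial_{vax}h(z))
        \qquad\text{for all }x,z\in\Sigma^\ast.
\]
Given arbitrary $x,z$, put $x'=ax\in\Sigma^\ast$. By associativity of concatenation, $(ua)x=u x'$ and $(va)x=v x'$, so the required equality reads $\sigma_K(\partial_{ux'}h(z))=\sigma_K(\partial_{vx'}h(z))$. This is exactly an instance of the hypothesis $u\equiv_K v$ with right context $x'$. In the facial-language formulation the same step gives $F_{(ua)x,\pi}=F_{ux',\pi}=F_{vx',\pi}=F_{(va)x,\pi}$ for every $\pi$. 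Here $\partial_{w}h$ depends only on the word $w$, not on how it was factored, since $\partial_w h(z)=h(wz)-h(w)$.

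There is essentially no obstacle. The only point needing care is that the definition closes under all right contexts $x$, not only $x=\varepsilon$, which is what makes the relation right-stable by construction. Induction on length would also extend the statement from letters to arbitrary words $y$, but the same substitution $x'=yx$ gives that directly.
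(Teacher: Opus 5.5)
Your proposal is correct and matches the paper's proof: both instantiate the hypothesis $u\equiv_K v$ at the right context $x'=ax$, absorbing the letter $a$ into the universally quantified context. Your remark that $\equiv_K$ is an equivalence relation, as the kernel of $u\mapsto(\sigma_K(\partial_{ux}h(z)))_{x,z}$, is a harmless addition that the paper leaves implicit.
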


\begin{proof}
Assume $u\equiv_K v$. For $ua$ and $va$, the defining condition requires equality of $F_{uax,\pi}$ and $F_{vax,\pi}$ for every $x$ and $\pi$. This is exactly the defining condition for $u\equiv_K v$ applied to the right context $ax$. Hence $ua\equiv_K va$.
\end{proof}

For a profile $\pi$, one may also consider the global language
\[
        L_\pi=\{w\in\Sigma^\ast:\sigma_K(h(w))=\pi\}.
\]
The carrier below is not built from $L_\pi$ alone. It is built from the explicit residual languages
\[
        F_{u,\pi}=\{z:\sigma_K(h(uz)-h(u))=\pi\},
\]
indexed by the prefix $u$. This notation is used throughout; no inverse-language shorthand is needed.

The following proposition records the carrier of the cone-induced observation family. The proof is the standard Myhill--Nerode right-congruence argument for a chosen observation family. The nonstandard part is not this minimization proof pattern; it is the source and calculus of the observations: the family is generated by the extreme-dual covectors of $K$, restricted to residual futures, and later compared by cone-refinement operations.

\begin{proposition}[Carrier of the conic residual stratification]
The quotient $Q_K(h)$ is the coarsest right congruence preserving the cone-induced residual stratification. More precisely, for any right congruence $\rho$ on $\Sigma^\ast$, the following are equivalent:
\begin{enumerate}[label=(\roman*)]
\item $\rho$ refines $\equiv_K$;
\item whenever $u\rho v$, one has
\[
        F_{ux,\pi}=F_{vx,\pi}
        \quad\text{for every right context }x\in\Sigma^\ast
        \text{ and every conic profile }\pi .
\]
\end{enumerate}
Equivalently, every deterministic carrier whose state congruence preserves all languages $F_{ux,\pi}$ factors through $Q_K(h)$. Any proper coarsening of $Q_K(h)$ merges two prefixes separated by a concrete residual future facial language.
\end{proposition}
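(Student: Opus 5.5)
The argument is the standard Myhill--Nerode pattern, and I would organize it in three steps: the equivalence of (i) and (ii), the factorization statement, and the statement about proper coarsenings. Throughout I would use only the definition of $\equiv_K$ and the preceding Right congruence lemma.

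\emph{Step 1: (i)$\Leftrightarrow$(ii).} By the definition of the conic covector congruence, $u\equiv_K v$ holds exactly when $F_{ux,\pi}=F_{vx,\pi}$ for every right context $x$ and every profile $\pi$. So (ii) says that $u\rho v$ implies $u\equiv_K v$, which is precisely the statement that $\rho$ refines $\equiv_K$. I expect the right congruence hypothesis on $\rho$ not to be needed here. It matters only for the claim that $Q_K(h)$ is the coarsest \emph{right congruence} preserving the stratification.

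For that claim I would proceed as follows. The Right congruence lemma shows that $\equiv_K$ is itself a right congruence. Taking $x=\varepsilon$ in its definition shows that it preserves every $F_{u,\pi}$. I would also note a converse: any right congruence $\rho$ that merely preserves the unextended languages ($u\rho v\Rightarrow F_{u,\pi}=F_{v,\pi}$) automatically satisfies (ii). Indeed, $u\rho v$ gives $ux\,\rho\,vx$ by right invariance (induction on $|x|$), and then $F_{ux,\pi}=F_{vx,\pi}$. This is the one point where the quantifier over right contexts has to be checked, and it is the only mildly nontrivial step.

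\emph{Step 2: factorization.} Take a deterministic carrier with reachability map $\delta:\Sigma^\ast\to S$, $u\mapsto \delta(s_0,u)$. Its state congruence is $\ker\delta$, and this is a right congruence by determinism: $\delta(u)=\delta(v)$ implies $\delta(ua)=\delta(va)$. If this congruence preserves all $F_{ux,\pi}$, then Step 1 gives $\ker\delta\subseteq\equiv_K$. Hence the map $\delta(u)\mapsto[u]_K$, defined on reachable states, is well defined. It commutes with the letter actions because both relations are right congruences, and it preserves cell observations because those are read off from the $F$-languages. Uniqueness is immediate, since the map is determined on reachable states.

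\emph{Step 3: proper coarsenings.} Let $\rho\supsetneq\equiv_K$ be a proper coarsening. Then some pair satisfies $u\rho v$ but $u\not\equiv_K v$. Unfolding the definition yields $x,z,\pi$ with $z\in F_{ux,\pi}\setminus F_{vx,\pi}$ or the reverse. Equivalently, $\sigma_K(\partial_{ux}h(z))\ne\sigma_K(\partial_{vx}h(z))$, and this is the concrete separating residual future facial language.

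No step is a real obstacle. The only care required is bookkeeping: Step 1 must be read with the right congruence hypothesis so that ``coarsest'' refers to right congruences, and the word ``coarsest'' must be justified by $\equiv_K$ itself being a right congruence.
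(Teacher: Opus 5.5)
Your proposal is correct and follows the paper's proof essentially step for step. Like the paper, you obtain (i)$\Leftrightarrow$(ii) by unfolding the definition of $\equiv_K$, get the factorization from the reachability congruence of a deterministic carrier, and produce a separating triple $(x,z,\pi)$ for any merged non-equivalent pair. Your two extra remarks are correct refinements that the paper leaves implicit: that $\equiv_K$ is itself a right congruence, so ``coarsest'' is attained, and that a right congruence preserving only the unextended languages $F_{u,\pi}$ already satisfies (ii).
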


\begin{proof}
By definition, $u\equiv_K v$ holds exactly when the equality in (ii) holds for all right contexts and profiles. Hence any right congruence satisfying (ii) identifies only pairs already identified by $\equiv_K$, so it refines $\equiv_K$. Conversely, if $\rho$ refines $\equiv_K$ and $u\rho v$, then $u\equiv_K v$, and the same equalities of residual future covector languages follow.

For the factorization claim, let $C$ be a deterministic carrier and let $\rho_C$ be the right congruence induced by reaching the same state of $C$. If $C$ preserves all languages $F_{ux,\pi}$, then $\rho_C$ satisfies (ii), hence $\rho_C$ refines $\equiv_K$. Therefore the map from reachable states of $C$ to $Q_K(h)$ is well defined. If a proper coarsening of $Q_K(h)$ merges $[u]$ and $[v]$, then $u\not\equiv_K v$; by definition there are $x,z$ and a profile $\pi$ such that $z$ belongs to exactly one of $F_{ux,\pi}$ and $F_{vx,\pi}$. This is a concrete residual future facial language separating the merged states.
\end{proof}

\begin{definition}[Conic observation carrier]
Assume $Q_K(h)$ is finite. The conic observation carrier of $h$ over $K$ is
\begin{equation}
        \mathcal C_K(h)=\bigl(Q_K(h),(T_a)_{a\in\Sigma},(\omega_z)_{z\in\Sigma^\ast}\bigr),
\end{equation}
where
\[
        T_a([u])=[ua]
        \qquad\text{and}\qquad
        \omega_z([u])=\sigma_K(\partial_u h(z)).
\]
The maps $T_a$ are the right action of the free monoid, while $\omega_z$ records the conic covector cell of a residual future. A morphism of carriers preserves the transition maps and the cell observations.
\end{definition}

The transition monoid generated by the $T_a$ is the ordinary syntactic transition monoid of the residual covector language family. The additional structure is the family of cell-valued observations $\omega_z$. Thus the term carrier is used only for the finite right-action representation of the conic observation quotient.

\section{Row View and Stratification Refinement}

The previous carrier proposition identifies a right congruence. The following reformulation records the same object as a row kernel: entries are finite conic cells rather than values in a field or semiring.

Let
\[
        \Pi_K=\{\sigma_K(x):x\in V\}\subseteq\{-1,0,+1\}^{\mathcal E(K^\ast)}
\]
be the set of realizable conic sign cells. Define the row map
\[
        \mathsf H_K(h):\Sigma^\ast\longrightarrow \Pi_K^{\Sigma^\ast\times\Sigma^\ast},
        \qquad
        \mathsf H_K(h)(u)(x,z)=\sigma_K(\partial_{ux}h(z)).
\]

\begin{proposition}[Row kernel]
For all words $u,v$,
\[
        u\equiv_K v
        \quad\Longleftrightarrow\quad
        \mathsf H_K(h)(u)=\mathsf H_K(h)(v).
\]
Hence $Q_K(h)$ is the row set of this cell-valued map. In particular, $Q_K(h)$ has finite index if and only if $\mathsf H_K(h)$ has finite image.
\end{proposition}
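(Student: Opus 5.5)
The plan is to unfold both sides of the equivalence into pointwise statements about the same family of sign profiles, and then read off the two remaining claims as set-theoretic consequences of a kernel identity.

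First, I will use the second (pointwise) form of the definition of $\equiv_K$: $u\equiv_K v$ holds exactly when
\[
        \sigma_K(\partial_{ux}h(z))=\sigma_K(\partial_{vx}h(z))
        \quad\text{for all }x,z\in\Sigma^\ast .
\]
I will first check that this form agrees with the language form. For fixed $x$, the languages $F_{ux,\pi}$, with $\pi$ ranging over profiles, partition $\Sigma^\ast$ according to the value of $\sigma_K(\partial_{ux}h(z))$. So equality of all $F_{ux,\pi}$ and $F_{vx,\pi}$ is equivalent to pointwise equality of the two profile-valued functions of $z$.

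Second, I will unfold the row map. The functions $\mathsf H_K(h)(u)$ and $\mathsf H_K(h)(v)$ are elements of $\Pi_K^{\Sigma^\ast\times\Sigma^\ast}$. They are equal precisely when they agree at every index $(x,z)$, that is, when the displayed pointwise condition holds. The entries do land in $\Pi_K$, since every $\sigma_K(\partial_{ux}h(z))$ is the profile of a vector of $V$. This gives the equivalence directly.

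Third, the equivalence says that $\equiv_K$ is the kernel of the map $\mathsf H_K(h)$. Hence $[u]\mapsto\mathsf H_K(h)(u)$ is a well-defined bijection from $Q_K(h)$ onto the image of $\mathsf H_K(h)$, which is the row set. Finite index of $\equiv_K$ means exactly that $Q_K(h)$ is finite, and the bijection makes this the same as finiteness of the image.

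No step is a real obstacle. The only point needing care is the translation between the language-indexed definition and the pointwise one, which rests on the partition remark in the first step.
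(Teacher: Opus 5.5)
Your proof is correct and follows the paper's own argument. Both unfold row equality into pointwise agreement of $\sigma_K(\partial_{ux}h(z))$ and $\sigma_K(\partial_{vx}h(z))$ over all $(x,z)$, identify this with the defining condition of $\equiv_K$, and read off the finite-index claim from the bijection between classes and distinct rows; your extra partition check, that the language form and the pointwise form of the definition agree, is a harmless addition that the paper simply asserts when stating the definition.
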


\begin{proof}
The equality of rows says exactly that for every right context $x$ and continuation $z$, the profiles $\sigma_K(\partial_{ux}h(z))$ and $\sigma_K(\partial_{vx}h(z))$ agree. This is the defining condition of $u\equiv_K v$. The finite-index statement follows because quotient classes are precisely distinct rows.
\end{proof}

The construction is monotone in the amount of dual geometry retained. If $S$ is any finite set of nonzero dual functionals, define $\sigma_S(x)=(\sign(s(x)))_{s\in S}$ and the corresponding congruence $\equiv_S$ by replacing $\sigma_K$ with $\sigma_S$ in the definition above.

\begin{proposition}[Stratification refinement law]
For finite dual test sets $S,T$,
\[
        \equiv_{S\cup T}=\equiv_S\cap\equiv_T.
\]
Consequently, if $S\subseteq T$, then
\[
        \equiv_T\subseteq\equiv_S,
\]
and there is a canonical surjection $Q_T(h)\twoheadrightarrow Q_S(h)$ whenever both quotients are finite.
\end{proposition}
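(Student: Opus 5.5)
The proof is a coordinatewise unwinding of the definitions. I would identify $\sigma_{S\cup T}(x)$ with the pair $(\sigma_S(x),\sigma_T(x))$, noting that coordinates indexed by $S\cap T$ appear in both components. The key observation is that two sign vectors indexed by $S\cup T$ agree if and only if they agree on every $s\in S$ and on every $t\in T$. This is immediate, because each index of $S\cup T$ lies in $S$ or in $T$, and conversely both $S$ and $T$ are subsets of $S\cup T$.

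For the identity $\equiv_{S\cup T}=\equiv_S\cap\equiv_T$, I would use the pointwise form of the definition: $u\equiv_S v$ holds iff $\sigma_S(\partial_{ux}h(z))=\sigma_S(\partial_{vx}h(z))$ for all $x,z\in\Sigma^\ast$. Fix $u,v$ and quantify over all pairs $(x,z)$. By the observation above, agreement of the $(S\cup T)$-profiles at every $(x,z)$ is equivalent to agreement of both the $S$-profiles and the $T$-profiles at every $(x,z)$. Since universal quantification distributes over conjunction, this is the same as $u\equiv_S v$ and $u\equiv_T v$. One could equally phrase the argument with the row maps $\mathsf H_S(h)$, as in the row kernel proposition.

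For the consequence, if $S\subseteq T$ then $S\cup T=T$. The identity then gives $\equiv_T=\equiv_S\cap\equiv_T\subseteq\equiv_S$.

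For the canonical surjection, I would define $Q_T(h)\to Q_S(h)$ by $[u]_T\mapsto[u]_S$. This is well defined because $\equiv_T\subseteq\equiv_S$, and it is surjective since every class $[u]_S$ is the image of $[u]_T$. I would also check that the map respects the carrier structure. First, $T_a([u]_T)=[ua]_T\mapsto[ua]_S=T_a([u]_S)$. Second, the $S$-cell observation $\omega_z$ is the restriction to $S$-coordinates of the $T$-observation, since $\sigma_S$ is a coordinate projection of $\sigma_T$.

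The finiteness hypothesis is needed only so that both quotients are carriers in the sense of the definition. No step is a genuine obstacle. The only point requiring care is to use the pointwise formulation of $\equiv_S$ rather than the languages $F_{u,\pi}$, so that profiles over different index sets are compared coordinatewise without having to re-index the profile sets.
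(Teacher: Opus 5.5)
Your proposal is correct and follows the paper's own argument. Both reduce the identity to the fact that agreement of $(S\cup T)$-profiles at every $(x,z)$ is the same as agreement of the $S$-profiles and of the $T$-profiles, and both get monotonicity and the surjection $[u]_T\mapsto[u]_S$ from $S\cup T=T$. Your additional check that this map commutes with the transitions and with the projected cell observations goes slightly beyond what the paper spells out, but it is consistent with it.
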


\begin{proof}
Two prefixes have the same $(S\cup T)$-profiles for all right contexts and continuations exactly when they have the same $S$-profiles and the same $T$-profiles for all right contexts and continuations. This proves the identity of congruences. The monotonicity and quotient map follow immediately.
\end{proof}

Taking $S=
\mathcal E(K^\ast)$ gives the intrinsic conic congruence. Taking a proper subset of extreme rays gives a coarser, representation-limited abstraction. Adding redundant inequalities can refine a chosen sign presentation, but it does not alter the presentation-independent quotient unless it introduces hyperplanes not identified with the extreme-dual-ray arrangement. Thus the algebraic content is not the existence of a Myhill--Nerode quotient; it is the stratification calculus that sends dual geometric data to right congruences.

\subsection{Observable covectors and refinement}

The previous refinement law can be organized as an observation-refinement correspondence. For any set $S$ of nonzero covectors, let $\equiv_S$ denote equality of all residual signs induced by covectors in $S$; finite $S$ gives a finite observation alphabet, but the definition itself does not require finiteness. For a right congruence $\rho$, define the set of covectors observable through $\rho$ by
\[
        \mathrm{Obs}_h(\rho)=
        \left\{r\in V^\ast\setminus\{0\}:
        u\rho v\Rightarrow
        \sign r(\partial_{ux}h(z))=
        \sign r(\partial_{vx}h(z))
        \text{ for all }x,z\in\Sigma^\ast
        \right\}.
\]
This set is closed under positive rescaling of covectors. It records precisely which Farkas directions can be read without splitting the states of $\rho$.

\begin{theorem}[Observable-covector correspondence]
Let $S\subset V^\ast\setminus\{0\}$ be any set of nonzero covectors and let $\rho$ be a right congruence on $\Sigma^\ast$. Then
\[
        S\subseteq \mathrm{Obs}_h(\rho)
        \quad\Longleftrightarrow\quad
        \rho\subseteq \equiv_S .
\]
Consequently, the assignment $S\mapsto\equiv_S$ is antitone: adding covectors refines the congruence. The operator
\[
        \operatorname{cl}_h(S)=\mathrm{Obs}_h(\equiv_S)
\]
will be used below.
\end{theorem}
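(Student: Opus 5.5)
The plan is to unfold both definitions and observe that each side says the same thing: for every pair $u\rho v$, every $r\in S$, and every $x,z\in\Sigma^\ast$,
\[
        \sign r(\partial_{ux}h(z))=\sign r(\partial_{vx}h(z)).
\]
I will record the two unfoldings explicitly.

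First, by the definition of $\mathrm{Obs}_h(\rho)$, the inclusion $S\subseteq\mathrm{Obs}_h(\rho)$ means the following. For each $r\in S$, whenever $u\rho v$, the equality above holds for all $x,z$.

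Second, $\equiv_S$ is defined by replacing $\sigma_K$ with $\sigma_S(y)=(\sign s(y))_{s\in S}$. So $u\equiv_S v$ means $\sigma_S(\partial_{ux}h(z))=\sigma_S(\partial_{vx}h(z))$ for all $x,z$. Two sign vectors indexed by $S$ are equal exactly when they agree in each coordinate $r\in S$. Hence $\rho\subseteq\equiv_S$, read as inclusion of relations on $\Sigma^\ast$, says: whenever $u\rho v$, for all $r\in S$ and all $x,z$, the equality above holds.

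The equivalence is then just an exchange of the universal quantifiers over $r\in S$ and over pairs $(u,v)\in\rho$. This exchange is valid because all quantifiers are universal. Nothing in the argument uses finiteness of $S$. The definition of $\equiv_S$ was explicitly allowed for arbitrary $S$, with the profile living in $\{-1,0,+1\}^S$.

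For antitonicity, suppose $S\subseteq T$. Apply the equivalence with $\rho=\equiv_T$. By definition, $\equiv_T$ is a right congruence; this is the same argument as in the Right congruence lemma, now with $\sigma_T$. Trivially $\equiv_T\subseteq\equiv_T$, so the equivalence gives $T\subseteq\mathrm{Obs}_h(\equiv_T)$. Hence $S\subseteq\mathrm{Obs}_h(\equiv_T)$, and the equivalence in the other direction yields $\equiv_T\subseteq\equiv_S$. This agrees with the Stratification refinement law in the finite case.

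Finally, $\operatorname{cl}_h$ is well defined because $\equiv_S$ is a right congruence. The same instance of the equivalence gives extensivity, $S\subseteq\operatorname{cl}_h(S)$, should that be wanted.

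There is no real obstacle here; the statement is a Galois-connection-type unfolding. The one point needing care is checking that $\equiv_S$ is a right congruence for arbitrary, possibly infinite, $S$. This is required so that $\rho=\equiv_S$ is a legitimate instance, and it follows verbatim from the Right congruence lemma, since the right-context quantifier absorbs the extra letter $a$.
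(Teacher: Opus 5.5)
Your proof is correct and matches the paper's argument: both directions are the same definitional unfolding, with the universal quantifiers over $S$ and over $\rho$-related pairs exchanged, and antitonicity follows by instantiating $\rho=\equiv_T$, exactly as the paper does. You also check explicitly that $\equiv_S$ is a right congruence for arbitrary $S$, which makes $\rho=\equiv_S$ a legitimate instance; the paper leaves this point implicit.
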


\begin{proof}
If $S\subseteq\mathrm{Obs}_h(\rho)$ and $u\rho v$, then every $s\in S$ has the same residual sign on $\partial_{ux}h(z)$ and $\partial_{vx}h(z)$ for all $x,z$. Hence $u\equiv_S v$, so $\rho\subseteq\equiv_S$.
Conversely, if $\rho\subseteq\equiv_S$ and $s\in S$, then $u\rho v$ implies $u\equiv_S v$, and the defining equality of all $S$-sign residuals gives $s\in\mathrm{Obs}_h(\rho)$.
The antitonicity is the case $S\subseteq T$: then $S\subseteq\mathrm{Obs}_h(\equiv_T)$, hence $\equiv_T\subseteq\equiv_S$.
\end{proof}

\begin{lemma}[Closure preserves the induced congruence]
For every set $S$ of nonzero covectors,
\[
        \equiv_{\operatorname{cl}_h(S)}=\equiv_S .
\]
\end{lemma}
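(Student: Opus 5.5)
The plan is to prove the two inclusions separately, using only the Observable-covector correspondence theorem and the antitonicity of $S\mapsto\equiv_S$ recorded there. Throughout, $\equiv_S$ is a right congruence for arbitrary (possibly infinite) $S$. This holds by the same argument as the Right congruence lemma, since its definition quantifies over all right contexts $x$. Hence the correspondence applies with $\rho=\equiv_S$.

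First, $S\subseteq\operatorname{cl}_h(S)$. Apply the correspondence with $\rho=\equiv_S$. The trivial inclusion $\equiv_S\subseteq\equiv_S$ gives $S\subseteq\mathrm{Obs}_h(\equiv_S)=\operatorname{cl}_h(S)$. Every element of $\operatorname{cl}_h(S)$ is a nonzero covector by definition of $\mathrm{Obs}_h$, so $\equiv_{\operatorname{cl}_h(S)}$ is defined. Antitonicity then yields
\[
        \equiv_{\operatorname{cl}_h(S)}\subseteq\equiv_S .
\]

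Second, for the reverse inclusion, apply the correspondence with the set $\operatorname{cl}_h(S)$ in place of $S$ and again $\rho=\equiv_S$. The left-hand condition $\operatorname{cl}_h(S)\subseteq\mathrm{Obs}_h(\equiv_S)$ holds with equality by the definition of $\operatorname{cl}_h$. Therefore
\[
        \equiv_S\subseteq\equiv_{\operatorname{cl}_h(S)} .
\]
Concretely, if $u\equiv_S v$ and $r\in\operatorname{cl}_h(S)$, the definition of $\mathrm{Obs}_h(\equiv_S)$ gives $\sign r(\partial_{ux}h(z))=\sign r(\partial_{vx}h(z))$ for all $x,z$. This is exactly membership in $\equiv_{\operatorname{cl}_h(S)}$. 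Combining the two inclusions gives the equality.

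There is no real obstacle; this is the standard fact that $\equiv\circ\mathrm{Obs}\circ\equiv=\equiv$ for an antitone Galois connection. The only points needing care are that the correspondence theorem is stated for arbitrary, not just finite, sets of covectors, which the paper allows, and that $\equiv_S$ qualifies as a right congruence so it may serve as $\rho$. I would verify the latter explicitly in one line, by the right-context argument of the Right congruence lemma.
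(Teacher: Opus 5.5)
Your proof is correct and follows the paper's argument: antitonicity applied to $S\subseteq\operatorname{cl}_h(S)$ gives $\equiv_{\operatorname{cl}_h(S)}\subseteq\equiv_S$, and the observable-covector correspondence applied to the set $\operatorname{cl}_h(S)$ with $\rho=\equiv_S$ gives the reverse inclusion. You also make explicit two points the paper uses without comment here: the derivation of $S\subseteq\operatorname{cl}_h(S)$ from the tautology $\equiv_S\subseteq\equiv_S$, and the check that $\equiv_S$ is a right congruence.
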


\begin{proof}
Let $C=\operatorname{cl}_h(S)=\mathrm{Obs}_h(\equiv_S)$. Since $S\subseteq C$, antitonicity gives $\equiv_C\subseteq\equiv_S$.
For the reverse inclusion, $C\subseteq\mathrm{Obs}_h(\equiv_S)$ holds by definition of $C$. Applying the observable-covector correspondence with covector set $C$ and right congruence $\equiv_S$ gives $\equiv_S\subseteq\equiv_C$. Hence the two congruences are equal.
\end{proof}

\begin{proposition}[Observable-covector closure]
The operator $\operatorname{cl}_h(S)=\mathrm{Obs}_h(\equiv_S)$ is a closure operator on covector sets up to positive scaling: it contains $S$, is monotone in $S$, and satisfies
\[
        \operatorname{cl}_h(\operatorname{cl}_h(S))=\operatorname{cl}_h(S).
\]
\end{proposition}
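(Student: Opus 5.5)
The plan is to obtain all three closure properties as formal consequences of the Observable-covector correspondence theorem and of the lemma that closure preserves the induced congruence. No geometry of $K$ should be needed. The whole argument lives in the Galois-type connection between covector sets and right congruences given by $S\mapsto\equiv_S$ and $\rho\mapsto\mathrm{Obs}_h(\rho)$.

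\emph{Extensivity.} We first note that $\equiv_S$ is a right congruence; this follows exactly as in the right-congruence lemma, since the defining condition quantifies over all right contexts. Apply the correspondence with $\rho=\equiv_S$. The trivial inclusion $\equiv_S\subseteq\equiv_S$ then yields $S\subseteq\mathrm{Obs}_h(\equiv_S)=\operatorname{cl}_h(S)$.

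\emph{Monotonicity.} Suppose $S\subseteq T$. By the antitonicity part of the theorem, $\equiv_T\subseteq\equiv_S$. Next we check directly from the definition that $\mathrm{Obs}_h$ is antitone in $\rho$. If $\rho'\subseteq\rho$ and $r\in\mathrm{Obs}_h(\rho)$, then any pair $u\rho' v$ is also a pair $u\rho v$, so $r$ has equal residual signs on $\partial_{ux}h(z)$ and $\partial_{vx}h(z)$ for all $x,z$. Hence $r\in\mathrm{Obs}_h(\rho')$. Composing the two antitone maps gives $\operatorname{cl}_h(S)=\mathrm{Obs}_h(\equiv_S)\subseteq\mathrm{Obs}_h(\equiv_T)=\operatorname{cl}_h(T)$.

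\emph{Idempotence.} By the closure-preserves-congruence lemma, $\equiv_{\operatorname{cl}_h(S)}=\equiv_S$. Applying $\mathrm{Obs}_h$ to both sides gives
\[
\operatorname{cl}_h(\operatorname{cl}_h(S))=\mathrm{Obs}_h(\equiv_{\operatorname{cl}_h(S)})=\mathrm{Obs}_h(\equiv_S)=\operatorname{cl}_h(S).
\]

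\emph{Positive scaling.} The phrase ``up to positive scaling'' is handled by observing two facts. First, $\mathrm{Obs}_h(\rho)$ is closed under positive rescaling, as already noted before the theorem. Second, $\equiv_S$ depends only on the set of rays of $S$, because $\sign(\lambda r(y))=\sign(r(y))$ for $\lambda>0$. Hence every statement descends to sets of rays.

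The only mildly delicate point is that $\equiv_S$ may be defined for infinite $S$. The definition of $\equiv_S$ and the correspondence theorem were stated without finiteness, so nothing beyond these formal steps is required.
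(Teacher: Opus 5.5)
Your proof is correct and follows the paper's argument essentially step for step: extensivity comes from the observable-covector correspondence applied to $\equiv_S\subseteq\equiv_S$, monotonicity from composing the antitone maps $S\mapsto\equiv_S$ and $\rho\mapsto\mathrm{Obs}_h(\rho)$, and idempotence from the lemma $\equiv_{\operatorname{cl}_h(S)}=\equiv_S$. You also check explicitly that $\equiv_S$ is a right congruence, so the correspondence theorem applies as stated; the paper leaves this hypothesis implicit.
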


\begin{proof}
Extensivity follows from the observable-covector correspondence applied to the tautology $\equiv_S\subseteq\equiv_S$, giving $S\subseteq\mathrm{Obs}_h(\equiv_S)$.
For monotonicity, suppose $S\subseteq T$. Antitonicity gives $\equiv_T\subseteq\equiv_S$. If $r\in\operatorname{cl}_h(S)$, then $r$ is observable through $\equiv_S$; since $\equiv_T$ is finer, $r$ is also observable through $\equiv_T$. Thus $\operatorname{cl}_h(S)\subseteq\operatorname{cl}_h(T)$.
For idempotence, the previous lemma gives $\equiv_{\operatorname{cl}_h(S)}=\equiv_S$, so
\[
        \operatorname{cl}_h(\operatorname{cl}_h(S))
        =\mathrm{Obs}_h(\equiv_{\operatorname{cl}_h(S)})
        =\mathrm{Obs}_h(\equiv_S)
        =\operatorname{cl}_h(S).
\]
\end{proof}

The theorem identifies, for a given residual quotient, the largest dual covector geometry that the quotient can observe without further refinement. This gives the advertised coupling between cone geometry and residual algebra.

\subsection{Cone-refinement calculus}

Different cones can induce the same or different covector partitions. To avoid relying on cone inclusion alone, define an observation refinement preorder on cones by
\[
        K_1\preceq_{\mathrm{obs}}K_2
        \quad\Longleftrightarrow\quad
        \sigma_{K_2}(x)=\sigma_{K_2}(y)
        \Rightarrow
        \sigma_{K_1}(x)=\sigma_{K_1}(y)
        \quad\text{for all }x,y\in V .
\]
Thus $K_2$ has at least as fine an extreme-dual sign arrangement as $K_1$.

\begin{theorem}[Cone-refinement morphisms]
If $K_1\preceq_{\mathrm{obs}}K_2$, then
\[
        \equiv_{K_2}\subseteq\equiv_{K_1}.
\]
Hence, whenever both quotients are finite, there is a canonical surjective morphism of right-action carriers
\[
        Q_{K_2}(h)\twoheadrightarrow Q_{K_1}(h),
        \qquad [u]_{K_2}\mapsto [u]_{K_1}.
\]
If $K_1$ and $K_2$ induce the same extreme-dual sign arrangement, then their conic observation carriers are isomorphic for every $h$ by the identity on prefix classes after the chosen relabeling of observation coordinates.
\end{theorem}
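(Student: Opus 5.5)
The argument unwinds the definition of $\equiv_K$ pointwise and then invokes the factorization principle of the carrier proposition. I will prove the inclusion $\equiv_{K_2}\subseteq\equiv_{K_1}$ first, then derive the morphism, then the isomorphism clause.

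For the inclusion, take $u\equiv_{K_2}v$. By the equivalent form of the conic covector congruence,
\[
\sigma_{K_2}(\partial_{ux}h(z))=\sigma_{K_2}(\partial_{vx}h(z))\quad\text{for all }x,z\in\Sigma^\ast.
\]
Fix $x,z$ and apply the definition of $K_1\preceq_{\mathrm{obs}}K_2$ to the pair $\partial_{ux}h(z)$, $\partial_{vx}h(z)$. This gives $\sigma_{K_1}(\partial_{ux}h(z))=\sigma_{K_1}(\partial_{vx}h(z))$. Since $x,z$ are arbitrary, $u\equiv_{K_1}v$. The step is purely pointwise: the preorder is quantified over all of $V$, so no restriction to $W_h$ is needed.

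For the morphism, define $\phi([u]_{K_2})=[u]_{K_1}$. It is well defined by the inclusion just proved, and surjective because every $K_1$-class has a representative prefix. It commutes with the right action, since
\[
\phi(T_a[u]_{K_2})=[ua]_{K_1}=T_a[u]_{K_1}.
\]
Equivalently, apply the carrier proposition with $\rho=\equiv_{K_2}$, a right congruence by the right congruence lemma. The cell observations lie in different alphabets $\Pi_{K_2}$ and $\Pi_{K_1}$, so "morphism of right-action carriers" must be read as compatibility with the transitions together with a translation of observations. The preorder gives a well-defined map $\tau:\sigma_{K_2}(V)\to\sigma_{K_1}(V)$, $\sigma_{K_2}(x)\mapsto\sigma_{K_1}(x)$, and then $\omega^{K_1}_z\circ\phi=\tau\circ\omega^{K_2}_z$. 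I expect the only real care in the proof to lie here: $\phi$ is a morphism of right actions intertwining observations through $\tau$, and it is a strict morphism only when the observation alphabets coincide.

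For the final clause, "same extreme-dual sign arrangement" means that $K_1\preceq_{\mathrm{obs}}K_2$ and $K_2\preceq_{\mathrm{obs}}K_1$, with $\tau$ a bijection of realized cells induced by a relabeling. Applying the inclusion in both directions gives $\equiv_{K_1}=\equiv_{K_2}$, so the identity on prefix classes is a bijection commuting with $T_a$. The observations correspond under $\tau$, which is the chosen relabeling. This is consistent with the earlier invariance proposition for restricted covector isomorphisms, which could alternatively be cited for this last clause.
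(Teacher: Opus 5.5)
Your proposal is correct and follows the paper's proof. You apply $\preceq_{\mathrm{obs}}$ pointwise to each pair $\partial_{ux}h(z),\partial_{vx}h(z)$ to get $\equiv_{K_2}\subseteq\equiv_{K_1}$, read off the quotient map from that inclusion, and obtain the isomorphism clause by mutual refinement, with your translation map $\tau$ on realized cells adding a precision the paper leaves implicit, since the two carriers' observations live in different alphabets.
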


\begin{proof}
Assume $u\equiv_{K_2}v$. Then all $K_2$-profiles of $\partial_{ux}h(z)$ and $\partial_{vx}h(z)$ agree for all $x,z$. By the definition of $K_1\preceq_{\mathrm{obs}}K_2$, equality of $K_2$-profiles implies equality of $K_1$-profiles pointwise. Thus $u\equiv_{K_1}v$. The quotient morphism is well defined by this inclusion of congruences. If the arrangements refine each other, both inclusions hold and the quotients are isomorphic.
\end{proof}

This calculus separates presentation-independent conic structure from representation artifacts. Positive rescaling and relabeling of extreme-ray generators do not change the carrier. Adding non-extreme redundant inequalities may refine a chosen presentation, but it corresponds to changing the observation set, not to changing the presentation-independent cone-induced carrier.

Observation refinement is not the same relation as cone inclusion. If $K_1\subseteq K_2$, then $K_2^\ast\subseteq K_1^\ast$, so the order is stronger but the extreme-ray arrangements need not refine each other as sign partitions of $V$. For example, in $\mathbb Q^2$ the positive orthant uses coordinate tests $(1,0)$ and $(0,1)$, while the wedge $K_\triangle$ uses the mixed tests $(-1,1)$ and $(1,1)$. Neither sign arrangement globally refines the other: coordinate signs can distinguish vectors that the wedge tests place in the same cell, and conversely the wedge tests can distinguish vectors with the same coordinate signs. The preorder $\preceq_{\mathrm{obs}}$ therefore records arrangement refinement, not set inclusion of cones.

\begin{lemma}[Finite rational refinement test]
Let $S_1,S_2\subset(\mathbb Q^d)^\ast$ be finite covector families, and write $\sigma_i$ for their sign maps. Then $S_2$ refines $S_1$,
\[
        \sigma_2(x)=\sigma_2(y)\Rightarrow\sigma_1(x)=\sigma_1(y)
        \quad\text{for all }x,y\in\mathbb Q^d,
\]
if and only if every realizable sign cell of $S_2$ is contained in a sign cell of $S_1$. This condition is decidable over the rationals by enumerating sign vectors $\eta\in\{-1,0,+1\}^{S_2}$ and, for each $s\in S_1$, checking whether the rational linear system defining the $\eta$-cell admits points with two different signs of $s$.
\end{lemma}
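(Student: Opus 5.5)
The plan has two parts: first the equivalence between pointwise refinement and cell containment, then the decision procedure, which reduces to rational linear feasibility.

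For each $\eta\in\{-1,0,+1\}^{S_2}$, define the $S_2$-cell
\[
C_\eta=\{x\in\mathbb Q^d:\sign s(x)=\eta_s\text{ for all }s\in S_2\}.
\]
The cell is realizable when $C_\eta\neq\emptyset$. The nonempty cells partition $\mathbb Q^d$, and $\sigma_2(x)=\sigma_2(y)$ holds exactly when $x$ and $y$ lie in the same $C_\eta$.

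\emph{Forward direction.} Assume refinement and take a realizable $C_\eta$. Any $x,y\in C_\eta$ have equal $\sigma_2$, hence equal $\sigma_1$. So $\sigma_1$ is constant on $C_\eta$, and $C_\eta$ lies in the single $S_1$-cell given by that common value.

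\emph{Converse.} If every realizable $C_\eta$ lies in an $S_1$-cell, then $\sigma_2(x)=\sigma_2(y)$ puts $x,y$ in a common $C_\eta$, hence in a common $S_1$-cell, so $\sigma_1(x)=\sigma_1(y)$.

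This part is purely set-theoretic. It is the same pointwise argument used in the Cone-refinement morphisms theorem.

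For decidability, note that $C_\eta$ lies in an $S_1$-cell iff $\sigma_1$ is constant on $C_\eta$, that is, iff for every $s\in S_1$ the sign of $s$ is constant on $C_\eta$. Checking one coordinate at a time is legitimate because constancy of the vector $\sigma_1$ is the conjunction of constancy of each coordinate. The procedure is then:
\begin{enumerate}
\item Enumerate the finitely many $\eta$ ($3^{|S_2|}$ of them).
\item For each $\eta$, test whether $C_\eta$ is nonempty.
\item For each nonempty $C_\eta$ and each $s\in S_1$, test, for each of the three ordered pairs of distinct signs $\{\epsilon,\epsilon'\}$, whether $C_\eta$ contains a point with $\sign s=\epsilon$ and a point with $\sign s=\epsilon'$.
\end{enumerate}
A cheaper equivalent test is to check which of the three sets $C_\eta\cap\{\sign s=\epsilon\}$ are nonempty. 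Then $s$ has constant sign on $C_\eta$ iff exactly one of them is nonempty. Each such set is defined by a finite system of rational linear equalities, strict inequalities and non-strict inequalities.

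The main obstacle is deciding feasibility of such mixed strict/non-strict rational systems, together with the fact that points are restricted to $\mathbb Q^d$.

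\emph{Rational points.} Each system has rational coefficients. A nonempty set of this kind in $\mathbb R^d$ is relatively open in a rational affine subspace, and it contains rational points because rationals are dense in that subspace. So feasibility over $\mathbb R$ and over $\mathbb Q$ coincide.

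\emph{Strict inequalities.} Since every constraint is homogeneous, I would use the standard homogenization trick. Replace each strict inequality $a(x)>0$ by $a(x)\ge t$, add the constraint $t\le 1$, and maximize $t$ by rational linear programming. The system is feasible iff the optimum is positive. Here the bound $t\le 1$ keeps the linear program bounded, and positive scaling of a feasible point shows that any feasible system achieves a positive optimum. Alternatively, Fourier--Motzkin elimination, which handles strict inequalities, gives the same decision. Either way each check is effective over $\mathbb Q$, and since there are finitely many checks, the refinement condition is decidable.
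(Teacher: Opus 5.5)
Your proposal is correct and follows essentially the same route as the paper: the equivalence is the same cell-language restatement of the definition, and decidability is the same enumeration over $\eta$, $s\in S_1$, and pairs of distinct signs, each reduced to homogeneous rational feasibility via a slack variable and positive rescaling. You also add a density argument showing that real and rational feasibility coincide, and an explicit bounded LP that maximizes $t$; both make precise what the paper only cites as standard. One small caveat: your relative-openness claim holds because sign-defined systems contain only equalities and strict inequalities, not the non-strict ones you also list.
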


\begin{proof}
The first equivalence is just the definition of refinement in cell language: equality of $S_2$-profiles means membership in the same $S_2$-cell, and equality of $S_1$-profiles means containment in one $S_1$-cell.
For decidability, an $S_2$ sign vector $\eta$ defines a rational polyhedral cone by constraints $t(x)>0$, $t(x)=0$, or $t(x)<0$ for $t\in S_2$. For a fixed $s\in S_1$, failure of constancy is witnessed by two points $x,y$ in the same $\eta$-cell and two different signs for $s(x)$ and $s(y)$. Thus one checks the finitely many pairs of requested signs in $\{-1,0,+1\}^2$ with distinct entries. Each check is a rational linear feasibility problem with equalities, non-strict inequalities, and strict inequalities. Strict rational linear inequalities are decidable by standard rational linear programming, equivalently by adding a rational slack variable and normalizing homogeneous strict constraints. Since there are finitely many $\eta$, $s$, and sign pairs, the test terminates.
\end{proof}

A direct decision example may be useful. Let $S_1=\{(1,0)\}$, $S_2=\{(1,0),(0,1)\}$, $S_3=\{(2,0)\}$, and $S_4=\{(1,1)\}$ as sign-test families on $\mathbb Q^2$. The arrangements for $S_1$ and $S_3$ are the same, since positive rescaling does not change signs. The arrangement for $S_2$ is finer than that for $S_1$, since equality of both coordinate signs implies equality of the first coordinate sign. The arrangements for $S_1$ and $S_4$ are incomparable: $(1,-2)$ and $(1,2)$ have the same $S_1$ sign but different $S_4$ signs, while $(1,-2)$ and $(-1,2)$ have the same $S_4$ sign but different $S_1$ signs. This is the practical test for $\preceq_{\mathrm{obs}}$: one must check whether every cell of the proposed finer arrangement is contained in a cell of the proposed coarser arrangement.

\begin{proposition}[Cell versus value collapse]
Let $\equiv_{\mathrm{vec}}$ denote equality of all vector residual futures. Then
\[
        \equiv_{\mathrm{vec}}\subseteq \equiv_K.
\]
The inclusion is strict if and only if there exist prefixes $u,v$ such that some vector residual future differs,
\[
        \partial_{ux}h(z)\ne \partial_{vx}h(z),
\]
for some $x,z$, while all such futures remain inside the same conic sign cells:
\[
        \sigma_K(\partial_{ux}h(z))=
        \sigma_K(\partial_{vx}h(z))
        \qquad\text{for every }x,z.
\]
\end{proposition}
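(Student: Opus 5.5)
The inclusion $\equiv_{\mathrm{vec}}\subseteq\equiv_K$ follows because $\sigma_K$ is a function of the vector. First I fix what $\equiv_{\mathrm{vec}}$ means: $u\equiv_{\mathrm{vec}}v$ iff $\partial_{ux}h(z)=\partial_{vx}h(z)$ for all $x,z\in\Sigma^\ast$. This matches the right-context formulation of $\equiv_K$. If the paper's intended $\equiv_{\mathrm{vec}}$ quantifies only over $z$, the two versions coincide, since $\partial_{ux}h(z)=\partial_u h(xz)-\partial_u h(x)$. Now suppose $u\equiv_{\mathrm{vec}}v$. Applying $\sigma_K$ to both sides of each equality $\partial_{ux}h(z)=\partial_{vx}h(z)$ gives equal sign profiles for every $x,z$. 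By the equivalent form of the conic covector congruence definition, this is exactly $u\equiv_K v$. Equivalently, this is the cell-versus-value instance of the refinement pattern used for the stratification refinement law: the identity observation refines the observation $\sigma_K$.

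For the strictness characterization, I unfold what a strict inclusion of relations means. The inclusion $\equiv_{\mathrm{vec}}\subsetneq\equiv_K$ holds iff there is a pair $(u,v)$ with $u\equiv_K v$ but $u\not\equiv_{\mathrm{vec}}v$.

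1. By definition, $u\not\equiv_{\mathrm{vec}}v$ means there exist $x,z$ with $\partial_{ux}h(z)\ne\partial_{vx}h(z)$. This is the first displayed condition.
2. By definition, $u\equiv_K v$ means $\sigma_K(\partial_{ux}h(z))=\sigma_K(\partial_{vx}h(z))$ for every $x,z$. This is the second displayed condition.

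Both directions of the ``if and only if'' are therefore a direct rewriting, using the inclusion already proved.

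The main obstacle is purely a matter of bookkeeping: the two quantifiers must be read correctly. The differing future is existential in $(x,z)$, while the cell agreement is universal in $(x,z)$ for the same pair of prefixes. I will state this explicitly so the characterization is not misread as a pointwise statement about a single future.

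I may also add a remark pointing to the running example. There, the $a$-loops $(1,2)$ and $(2,4)$ lie in the same cell $(+,+)$, so the conditions are witnessed and the inclusion is strict for that language. This remark is not needed for the proof.
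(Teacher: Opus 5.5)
Your proof is correct and follows the same argument as the paper: the inclusion holds because $\sigma_K$ is a function of the residual vector, and the strictness clause is just the unfolding of $u\equiv_K v$ together with $u\not\equiv_{\mathrm{vec}} v$. Drop the optional closing remark, however, because it is false: in the two-state running example the accumulated future $z=ab$ gives $(5,4)$ with profile $(-,+)$ from $p$ but $(6,6)$ with profile $(0,+)$ from $q$ (this is the paper's bounded-horizon run), so there $\equiv_K=\equiv_{\mathrm{vec}}$, and appealing to the matching $a$-loop cells is exactly the pointwise misreading you warned against.
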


\begin{proof}
Vector residual equality implies equality after applying every extreme dual ray and hence equality of conic sign profiles. Strictness is precisely the existence of a pair distinguished by vector values but not by the cell-valued rows.
\end{proof}

\section{Relation to Numerical Certificates}

The conic quotient preserves obstruction cells, not magnitudes. Difference-constraint certificates depend on accumulated scalar values. Hence the conic quotient cannot replace the numerical carrier for Bellman--Ford feasibility unless additional data is retained.

Let $\mathcal E(K^\ast)=\{r_1,
\ldots,r_m\}$. The exact numerical projected residual congruence is
\[
        u\equiv_{\mathrm{num},K} v
        \quad\Longleftrightarrow\quad
        r_i(\partial_u h(z))=r_i(\partial_v h(z))
        \quad\forall i,z.
\]
This is the ordinary residual congruence of the projected series
\[
        (r_1h,
        \ldots,r_mh).
\]

\begin{proposition}[Three layers]
The following implications hold:
\[
        u\equiv_{\text{vector}}v
        \Longrightarrow
        u\equiv_{\mathrm{num},K}v
        \Longrightarrow
        u\equiv_K v.
\]
Both implications can be strict.
\end{proposition}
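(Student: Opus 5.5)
The plan is to prove the two implications by a single linear identity that reduces right contexts to plain continuations, and then to give one small explicit series for each strictness claim.

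\emph{Implications.} Read $u\equiv_{\text{vector}}v$ as equality of all vector residual futures, $\partial_u h(z)=\partial_v h(z)$ for every $z$. This is the ordinary residual congruence, and it coincides with $\equiv_{\mathrm{vec}}$ of the cell-versus-value proposition by the identity below. The first implication is immediate: apply each $r_i$ to the equality $\partial_u h(z)=\partial_v h(z)$.

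For the second implication, note that $\equiv_{\mathrm{num},K}$ is defined only through plain continuations $z$, while $\equiv_K$ quantifies over right contexts $x$ as well. The bridge is the telescoping identity
\[
\partial_{ux}h(z)=h(uxz)-h(ux)=\partial_u h(xz)-\partial_u h(x).
\]
Assume $u\equiv_{\mathrm{num},K}v$. Applying each $r_i$ to this identity gives
\[
r_i(\partial_{ux}h(z))=r_i(\partial_u h(xz))-r_i(\partial_u h(x))=r_i(\partial_{vx}h(z)),
\]
since both terms on the right agree for $u$ and $v$ by hypothesis. Taking signs coordinatewise over $\mathcal E(K^\ast)=\{r_1,\dots,r_m\}$ yields $\sigma_K(\partial_{ux}h(z))=\sigma_K(\partial_{vx}h(z))$ for all $x,z$, which is exactly $u\equiv_K v$. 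The same identity shows that the vector relation defined with plain continuations agrees with the version using all right contexts.

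\emph{Strictness of the first implication.} This requires the $r_i$ to fail to separate vectors, so I will use a non-pointed cone. Take $K=\{x\in\mathbb Q^2:x_2\ge 0\}$, whose dual has the single extreme ray $r=(0,1)$, and $h(w)=(f(w),0)$. Every projected residual is then $0$, so all prefixes are $\equiv_{\mathrm{num},K}$-equivalent. It remains to choose $f$ with $f(a)-f(\varepsilon)\ne f(ba)-f(b)$, for instance the total weight of a two-state deterministic automaton whose $a$-loop weight changes after reading $b$. Then $\varepsilon$ and $b$ are separated by the vector relation. This is also an instance of the corollary on residual lineality collapse.

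\emph{Strictness of the second implication.} Keep the wedge $K_\triangle$ of the running example and a one-letter alphabet. Let $h(a^n)=2^n(1,2)$. Then
\[
\partial_{a^k}h(a^m)=(2^{k+m}-2^k)(1,2),
\]
which equals $0$ for $m=0$ and lies in the open cell $(+,+)$ for $m\ge1$, independently of $k$. Right contexts only shift $k$, so all prefixes are $\equiv_K$-equivalent. On the other hand, $r^+(\partial_{\varepsilon}h(a))=3\ne 6=r^+(\partial_a h(a))$, so $\varepsilon\not\equiv_{\mathrm{num},K}a$.

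\emph{Main obstacle.} The only delicate point is the mismatch of quantifiers: the numerical congruence uses plain continuations while $\equiv_K$ uses right contexts. The telescoping identity together with linearity of the $r_i$ resolves it. A second point is finding a witness for the first strictness: this needs a non-pointed cone, because for a full-dimensional pointed $K$ the extreme rays span $V^\ast$ and the first implication becomes an equivalence.
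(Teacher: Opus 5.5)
Your proof is correct and follows the paper's route. You apply the extreme rays for the first implication and pass from values to signs for the second. You realize strictness by a residual difference annihilated by $K^\ast$ (your half-plane example, whose lineality direction is $(1,0)$) and by magnitudes varying inside one open cell (your $2^n(1,2)$ series). Your telescoping identity $\partial_{ux}h(z)=\partial_u h(xz)-\partial_u h(x)$ makes explicit a step the paper's one-line ``numerical equality implies equality of signs'' leaves implicit. That step is needed because $\equiv_{\mathrm{num},K}$ quantifies only over continuations $z$, while $\equiv_K$ also ranges over right contexts $x$.
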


\begin{proof}
Vector residual equality implies equality after applying every dual extreme ray. Numerical equality implies equality of signs. Strictness occurs when vector residuals differ in a direction annihilated by $K^\ast$, or when scalar values differ without crossing any hyperplane $r_i(x)=0$.
\end{proof}

\begin{example}[Same covector type, different numerical feasibility data]
Let $K=\mathbb Q_{\ge0}$, so $K^\ast=\mathbb Q_{\ge0}$ has one extreme ray. Scalar weights $1$ and $2$ have the same conic cell: both are positive. A covector carrier identifies them. But a path of total cost $1$ and a path of total cost $2$ give different shortest-path potentials when combined with terminal offsets. Thus conic type does not determine potential values or numerical feasibility data. It only determines absence or presence of sign-level obstruction.
\end{example}

\section{Verification-Style Example: Resource Reserve Monitor}

This example shows the complete qualitative pipeline on a small deterministic resource monitor. The system has states $p_0,p_1,p_2$ and alphabet $\Sigma=\{a,b,c\}$. The vector value records a two-dimensional gap: the first coordinate is accumulated risk and the second coordinate is reserve. We use the wedge cone $K_\triangle$ with rays
\[
        r^- =(-1,1),\qquad r^+=(1,1).
\]
Transitions and vector increments are
\[
\begin{array}{c|ccc}
 & a & b & c\\
\hline
p_0 & p_1:(1,2) & p_2:(4,2) & p_0:(2,2)\\
 p_1 & p_1:(2,4) & p_2:(4,2) & p_0:(-2,2)\\
 p_2 & p_2:(1,1) & p_2:(4,2) & p_0:(1,3).
\end{array}
\]
The conic profiles of the displayed increments are obtained by applying $(r^-,r^+)$. Thus $(1,2)$ and $(2,4)$ are both $(+,+)$, $(4,2)$ is $(-,+)$, $(2,2)$ is $(0,+)$, $(-2,2)$ is $(+,0)$, and $(1,3)$ is $(+,+)$.

A bounded horizon-$1$ conic pass already finds that any $b$-transition is a qualitative obstruction, because its profile is $(-,+)$. This is a concrete Farkas witness: $r^-(4,2)=-2<0$, so that increment lies outside $K_\triangle$. The same pass also merges $a$-loops with increments $(1,2)$ and $(2,4)$ at the conic level, even though a numerical residual carrier would keep them separate. For horizon $2$, accumulated futures refine the result: for instance, from $p_0$ the word $ac$ accumulates $(1,2)+(-2,2)=(-1,4)$ with profile $(+,+)$, while from $p_1$ the word $ac$ accumulates $(2,4)+(-2,2)=(0,6)$ with profile $(+,+)$; these remain conically indistinguishable for this word. In contrast, contexts containing $b$ expose the negative cell immediately.

The exact numerical verification fallback is then run only on the scalarized problems for unresolved nonnegative cells. For the $r^-$ scalarization, the $b$ edge has scalar cost $-2$, so a one-edge counterexample is produced before a full potential computation. For cells that remain nonnegative under both rays, Bellman--Ford or another scalar potential method is still required if terminal offsets or magnitude-sensitive constraints are present. This example illustrates the role of the conic observation quotient: it exposes cone-geometric obstruction cells and can reduce the set of behaviours passed to numerical analysis.

\section{Qualitative Screening Role}

The conic observation carrier can be used as an abstract domain for the cone order. Its elements are conic cells rather than numerical path values. The abstraction map sends a vector value $x$ to its intrinsic profile $\sigma_K(x)$. If some extreme ray $r\in\mathcal E(K^\ast)$ satisfies $r(x)<0$, then $x\notin K$; such a negative component is a concrete Farkas witness. Therefore a reachable word whose accumulated value lies in a negative cell is a sound qualitative counterexample to cone inclusion.

This screening role is one-sided in the algorithmic sense. A negative cell gives a real obstruction. Absence of a negative cell in the explored conic quotient leaves magnitude-sensitive obligations unresolved, because scalar potentials depend on exact values and terminal offsets. A typical workflow is therefore to use the conic observation quotient as a qualitative pre-analysis: construct a bounded or exact quotient, search for negative cells, and then pass the unresolved part to the scalar potential procedure.

The useful regimes are correspondingly concrete. If accumulated futures cross an extreme-dual hyperplane at small horizons, the screen produces short Farkas witnesses. If futures stay in the same open positive cell, it may still collapse sign-equivalent behaviours but does not remove the scalar fallback. If many futures lie near faces of the cone, increasing the horizon can change both the partition and the number of witnesses. Section~\ref{sec:computational-evaluation} reports these cases with deterministic ancillary models.

\section{Collapse and Strict Separation}

The conic quotient coincides with the vector residual quotient only when conic cells separate every residual difference.

\begin{proposition}[Collapse criterion]
The conic quotient $Q_K(h)$ equals the ordinary vector residual quotient if and only if for every pair of distinct vector residuals there are words $x,z$ and an extreme ray $r\in\mathcal E(K^\ast)$ such that
\[
        \sign(r(\partial_{ux} h(z)))
        \ne
        \sign(r(\partial_{vx} h(z))).
\]
Equivalently, no two distinct vector residuals remain in the same extreme-ray sign cells in all right contexts and continuations.
\end{proposition}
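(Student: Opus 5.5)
The approach is to reduce the collapse criterion to the inclusion $\equiv_{\mathrm{vec}}\subseteq\equiv_K$ from the proposition on cell versus value collapse. Two congruences on $\Sigma^\ast$ give the same quotient exactly when they are equal as relations. One inclusion is already known, so equality of $Q_K(h)$ with the vector residual quotient is equivalent to the reverse inclusion $\equiv_K\subseteq\equiv_{\mathrm{vec}}$. I would first fix the meaning of ``distinct vector residuals'': a pair of prefixes $u,v$ with $u\not\equiv_{\mathrm{vec}}v$. Using the definition of $\equiv_{\mathrm{vec}}$ as equality of all vector residual futures $\partial_{ux}h(z)$, I read this as the existence of $x,z$ with $\partial_{ux}h(z)\neq\partial_{vx}h(z)$.

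The proof then proceeds by contraposition in both directions, using the row-kernel proposition so that $\equiv_K$ is literally equality of the cell-valued rows $\mathsf H_K(h)(u)(x,z)=\sigma_K(\partial_{ux}h(z))$.

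\emph{($\Rightarrow$)} Suppose the quotients coincide, and let $u\not\equiv_{\mathrm{vec}}v$. Then $u\not\equiv_K v$, so the rows differ at some $(x,z)$. The profiles $\sigma_K$ are tuples indexed by $\mathcal E(K^\ast)$, so a difference of profiles means a difference in some coordinate. This gives an extreme ray $r$ with the displayed sign inequality.

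\emph{($\Leftarrow$)} Suppose the separating condition holds, and let $u\equiv_K v$. If $u\not\equiv_{\mathrm{vec}}v$, the hypothesis supplies $x,z,r$ with differing $r$-signs. That contradicts equality of the rows, so $u\equiv_{\mathrm{vec}}v$. Together with the known inclusion this gives equality of the relations and hence of the quotients, together with their right actions $T_a$.

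The ``equivalently'' clause is the same statement negated: failure means some pair with distinct vector residuals has the same extreme-ray sign cells in every context and continuation. This is exactly the strictness witness in the cell versus value proposition.

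The main obstacle is interpretive rather than technical. ``Distinct vector residuals'' must be read as distinct $\equiv_{\mathrm{vec}}$-classes of prefixes $u,v$, which the statement does not name explicitly. The quantification must also be over all right contexts $x$, not only over $z$. If one instead read it as requiring a single $(x,z)$ at which the vectors differ to also differ in sign, the criterion would be false. I would therefore state explicitly that the separating $(x,z)$ may differ from the one witnessing vector distinctness.
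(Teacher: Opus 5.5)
Your proposal is correct and follows essentially the same route as the paper, which argues in two lines: if the separation property holds, then distinct vector residuals are split by $\equiv_K$, and if it fails, two distinct vector residuals are merged, so the conic quotient is strictly coarser. Your version spells this out through the inclusion $\equiv_{\mathrm{vec}}\subseteq\equiv_K$ and the row-kernel view, and your clarification of the quantifiers over $(x,z)$ is a faithful sharpening of what the paper leaves implicit.
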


\begin{proof}
If the stated separation property holds, any two distinct vector residuals are distinguished by the conic covector congruence, so the quotients coincide. If it fails, two distinct vector residuals have identical conic profiles in all right contexts and continuations and are identified by $Q_K(h)$, making the conic quotient strictly coarser.
\end{proof}

\begin{example}[Non-orthant strict coarsening]
Let
\[
        K_\triangle=\{(x_1,x_2):x_2-x_1\ge0,
        \ x_2+x_1\ge0\}.
\]
The extreme dual rays can be represented by $r^- =(-1,1)$ and $r^+=(1,1)$. The vectors $(1,2)$ and $(2,4)$ have different numerical values but identical signs under $r^-$ and $r^+$: both are strictly positive for both tests. A vector residual quotient can distinguish them. The conic covector quotient cannot, unless some continuation moves one vector across an extreme-dual-ray hyperplane.
\end{example}

This is an obstruction-type example: within the same open conic cell, facial geometry does not see magnitude.

\section{Construction Regimes and Complexity}

The procedures in this section compute sign-cell abstractions of accumulated residual futures. A negative sign under an extreme dual ray is a sound Farkas obstruction, while nonnegative conic cells leave scalar potential feasibility, shortest-path bounds, and terminal-offset constraints to the numerical layer.

The definition of $\equiv_K$ concerns signs of accumulated future residuals. Edge-local signs are insufficient in general: $+10+(-9)$ and $+1+(-2)$ have the same one-step sign pattern but different accumulated signs. Thus a correct construction must compare accumulated future profiles.

\subsection{A computable bounded-horizon fragment}

For a fixed horizon $H$, define
\[
        u\equiv_{K,H}v
        \quad\Longleftrightarrow\quad
        \sigma_K(\partial_{ux}h(z))=
        \sigma_K(\partial_{vx}h(z))
\]
for every $x,z$ with $|xz|\le H$. This finite approximation is useful in bounded model checking and gives a completely explicit algorithm.

For the finite weighted carrier notation in Procedure~1, let
\[
        h_p(a_1\cdots a_k)
        =
        \sum_{i=1}^k g(q_{i-1},a_i)+\tau(q_k),
        \qquad
        q_0=p,\quad q_i=\delta(q_{i-1},a_i).
\]
Thus $h_p(\epsilon)=\tau(p)$. If $p_x=\delta(p,x)$, then the accumulated residual used in the bounded signatures is
\[
        \partial_{px}h(z)=h_{p_x}(z)-h_{p_x}(\epsilon).
\]
Equivalently, for $z=a_1\cdots a_k$ from $p_x$, it is the sum of the edge increments along $z$, plus $\tau(q_k)-\tau(p_x)$.

\begin{procedure}[Bounded-horizon conic quotient]
Input: a finite reachable deterministic weighted carrier $P_R=(P,\Sigma,\delta,g,\tau)$ over $\mathbb Q^d$, where $g(p,a)$ is the vector increment on edge $(p,a)$ and $\tau$ is the terminal offset used to evaluate finite futures; a complete list of primitive extreme dual rays $\mathcal E(K^\ast)=\{r_1,\ldots,r_e\}$; and a horizon $H\ge0$.

Output: the partition $Q_{K,H}(h)$ preserving all conic residual profiles $\sigma_K(\partial_{px}h(z))$ with $|xz|\le H$, the induced quotient transitions when the partition is right-stable, and the list of reachable negative-cell witnesses $(p,x,z,r_i)$ found inside the horizon.

\begin{enumerate}[label=\arabic*.]
\item Enumerate all pairs $(x,z)$ with $|xz|\le H$.
\item For each reachable state $p$ and each pair $(x,z)$, compute $p_x=\delta(p,x)$ and then compute $\partial_{px}h(z)=h_{p_x}(z)-h_{p_x}(\epsilon)$ by summing the increments along $z$ and adding $\tau(q_k)-\tau(p_x)$.
\item Form the finite signature
\[
        \Theta_H(p)=
        \bigl(\sigma_K(\partial_{px}h(z))\bigr)_{|xz|\le H}.
\]
\item Identify states with the same signature.
\item Record every coordinate $r_i(\partial_{px}h(z))<0$ as a bounded Farkas witness.
\item If a quotient transition is needed, close the partition under letter successors and recompute signatures until stable.
\end{enumerate}
\end{procedure}

This bounded construction is exact for the horizon-limited relation by enumeration of accumulated futures. If $N=|P|$, $s=|\Sigma|$, $e=|\mathcal E(K^\ast)|$, and vectors have dimension $d$, then the number of word pairs is
\[
        B_H=\sum_{n=0}^H (n+1)s^n,
\]
because a word of total length $n$ can be split as $xz$ in $n+1$ ways. The signature table has $NB_H$ accumulated values and $NeB_H$ ray evaluations. With a dynamic program over the carrier, accumulated values can be computed in $O(NdB_H)$ rational arithmetic operations after the transition table is available. Each ray evaluation is a rational dot product costing $O(d)$ arithmetic operations, so the sign table costs $O(NedB_H)$ arithmetic operations. The final stable partition refinement adds at most the usual deterministic refinement cost, for example $O(sN^2)$ for a simple repeated-splitting implementation. Thus the exponential dependence is exactly in $B_H$, not hidden in an oracle.

\begin{example}[A bounded-horizon run]
Return to the two-state wedge carrier from the introduction. Use the edge labels as additive residual increments and take zero terminal offsets. For $H=1$, the relevant nonempty continuations are $a$ and $b$. Both states see profile $(+,+)$ on $a$, and both see profile $(-,+)$ on $b$. Hence $Q_{K,1}$ has one block.

For $H=2$, accumulated futures separate the states. From $p$, the word $ab$ yields
\[
        (1,2)+(4,2)=(5,4),
        \qquad
        \sigma_{K_\triangle}(5,4)=(-,+).
\]
From $q$, the same word yields
\[
        (2,4)+(4,2)=(6,6),
        \qquad
        \sigma_{K_\triangle}(6,6)=(0,+).
\]
Thus $Q_{K,2}$ splits the two states. This is a small complete run of the bounded algorithm and illustrates why accumulated residuals, rather than edge-local signs, are the data being minimized.
\end{example}

\subsection{Computational evaluation}
\label{sec:computational-evaluation}

The ancillary Rust implementation evaluates Procedure~1 with exact rational arithmetic on deterministic model families. The scope of the runs is the bounded construction; the exponential $B_H$ law above is the relevant scaling law for increasing horizons. The implementation uses normalized integer rationals, and the table was generated by a single \texttt{cargo run --quiet -- eval} run in Rust's default development profile. Runtime is wall-clock time for the conic pass in that bundled prototype.

The families are generated as follows. The resource monitor has $12$ states, alphabet size $4$, dimension $2$, and the wedge rays $(-1,1),(1,1)$. The random grid has $16$ states, alphabet size $4$, the same wedge rays, and deterministic seed $1729$. The positive-cell family has $10$ states and alphabet size $3$, with all increments staying in the open positive wedge cell. The near-boundary family has $14$ states and alphabet size $4$, with tight and near-tight increments around the wedge face. The large-alphabet family has $8$ states and alphabet size $8$. The high-dimensional family has $9$ states, alphabet size $3$, dimension $4$, and six supplied dual rays. Table~\ref{tab:screening-eval} reports the bounded conic screen.

Memory is a machine-independent working-set estimate used by the prototype:
\[
        M_{\mathrm{est}}
        =
        \left\lceil
        \frac{32NB_Hd+NB_He+64W+Ns(8+32d)}{1024}
        \right\rceil\text{ KiB},
\]
where $W$ is the number of stored negative witnesses. The full CSV and JSON outputs also include scalar Bellman--Ford edge scans, relaxations, first witness length, witness density, unique witnessing states, and unresolved conic cells.

\begin{table}[t]
\centering
\small
\begin{tabular}{@{}lrrrrrrr@{}}
\toprule
Family & Horizon & Word pairs & Ray evals & Blocks & Neg. wit. & Runtime & Memory \\
\midrule
resource-monitor & 1 & 9 & 216 & 12 & 10 & 0.68 ms & 11 KiB \\
resource-monitor & 2 & 57 & 1368 & 12 & 90 & 3.42 ms & 54 KiB \\
resource-monitor & 3 & 313 & 7512 & 12 & 540 & 16.78 ms & 280 KiB \\
random-grid & 3 & 313 & 10016 & 16 & 671 & 17.12 ms & 370 KiB \\
positive-cell & 3 & 142 & 2840 & 1 & 0 & 4.33 ms & 94 KiB \\
near-boundary & 1 & 9 & 252 & 6 & 9 & 0.31 ms & 13 KiB \\
near-boundary & 2 & 57 & 1596 & 14 & 87 & 2.12 ms & 61 KiB \\
near-boundary & 3 & 313 & 8764 & 14 & 565 & 11.99 ms & 322 KiB \\
large-alphabet & 2 & 209 & 3344 & 8 & 97 & 3.94 ms & 119 KiB \\
high-dimensional & 3 & 142 & 7668 & 9 & 627 & 10.72 ms & 211 KiB \\
\bottomrule
\end{tabular}
\caption{Bounded-horizon conic screening on deterministic ancillary model families. Word pairs equal $\sum_{n=0}^H(n+1)|\Sigma|^n$, and ray evaluations equal states times word pairs times extreme rays.}
\label{tab:screening-eval}
\end{table}

The table shows three distinct regimes. In the resource monitor, negative witnesses appear already at horizon $1$ and increase with the enumerated futures; the number of blocks does not collapse, so the gain is early obstruction discovery rather than quotient compression. In the near-boundary family, the partition changes from $6$ blocks at horizon $1$ to all $14$ blocks at horizon $2$, showing that accumulated futures can refine the sign abstraction after one-step evidence has been exhausted. In the positive-cell family, all explored futures stay nonnegative and the quotient collapses to one block; here the conic pass is useful as a compression and classification step, but it leaves the scalar fallback obligation intact.

The witness counts should be read together with density and first-witness length, not as raw evidence of scale. In the generated CSV, the resource-monitor horizon-$3$ run has witness density $71$ per mille and first witness length $1$; the positive-cell run has density $0$; and the high-dimensional run has six rays, so the ray-evaluation count grows even with fewer word pairs. The scalar fallback comparison in the JSON output records Bellman--Ford edge scans and relaxations for the same models. For example, the positive-cell family has no negative conic witness and zero scalar relaxations, while the random-grid family has negative witnesses across all states and two scalar rays with negative-cycle evidence. This comparison is a reproducibility check for the screening/fallback split: conic screening helps when it finds early witnesses or collapses cells, and it contributes little beyond compression in all-positive no-obstruction cases.

\paragraph{Mechanized ancillary core.}
The ancillary material has two roles. The Rust code implements the exact-rational bounded construction, deterministic model families, scalar fallback counters, and the evaluation files used in Table~\ref{tab:screening-eval}. The Lean code formalizes the sign layer used by the refinement arguments: \texttt{SameCell} is equality of sign profiles, \texttt{Refines} is arrangement refinement, \texttt{union\_refines\_left}, \texttt{union\_refines\_right}, and \texttt{sameCell\_union\_iff} formalize refinement by adding observations, and \texttt{neg\_not\_nonnegative} captures the negative-cell soundness step. The compiled \texttt{.olean} and \texttt{.ilean} artifacts are included as curated build evidence.

\subsection{Unbounded construction with a separation procedure}

For the unbounded relation, exact construction is characterization-level unless a separation procedure is available for the chosen class of weighted systems.

\begin{definition}[Future-separation procedure]
For two reachable states $p,q$ of a deterministic carrier for $h$, a conic future-separation procedure is sound if every returned witness is correct, and complete if \textsc{none} is returned only when no witness exists. Its output is either a witness $(r,z)$ with $r\in\mathcal E(K^\ast)$ and $z\in\Sigma^\ast$ such that
\[
        \sign(r(\partial_p h(z)))\ne \sign(r(\partial_q h(z))),
\]
or the answer \textsc{none}, certifying that no such witness exists.
\end{definition}

\begin{procedure}[Exact conic quotient with future separation]
Input: a reachable deterministic weighted carrier $P_R$, alphabet $\Sigma$, extreme rays $\mathcal E(K^\ast)$, and a conic future-separation procedure whose soundness and completeness are established for the chosen model class.

Output: the exact conic quotient $Q_K(h)$ on the reachable carrier, together with the separating witnesses used for the performed splits.

\begin{enumerate}[label=\arabic*.]
\item Initialize all reachable states in one block.
\item Within each block, compare state pairs using the separation procedure; split a block whenever a witness is returned.
\item Enforce right stability: split any block containing states whose $a$-successors lie in different blocks for some $a\in\Sigma$.
\item Iterate until no split occurs.
\end{enumerate}
\end{procedure}

\begin{proposition}[Soundness of exact construction]
If the future-separation procedure is sound and complete, the stable partition returned by Procedure~2 is exactly $\equiv_K$ on the reachable carrier.
\end{proposition}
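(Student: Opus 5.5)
To prove that the stable partition $\mathcal P$ returned by Procedure~2 equals $\equiv_K$ on reachable states, I will show two inclusions. For reachable $p,q$, write $p\sim q$ when $\sigma_K(\partial_{px}h(z))=\sigma_K(\partial_{qx}h(z))$ for all $x,z$. This is the state-level form of $\equiv_K$, since $h$ is given by the deterministic carrier and prefixes reaching the same state have identical residual rows; this is the row kernel proposition applied to reachable states. I will also use the fact that $\sim$ is right-stable, exactly as in the right congruence lemma: $p\sim q$ implies $\delta(p,a)\sim\delta(q,a)$, because the context $ax$ is itself a context.

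\emph{Step 1: every split is justified, so $\sim\subseteq\mathcal P$.} I will argue by induction on the number of splits that no split ever separates two $\sim$-equivalent states. Initially all states share one block. There are two kinds of split.
\begin{enumerate}
\item \emph{Witness splits.} By soundness, a witness $(r,z)$ for the pair $p,q$ gives $\sign r(\partial_p h(z))\ne\sign r(\partial_q h(z))$, so $p\not\sim q$ (take $x=\epsilon$). The split has to be done so that it never separates $\sim$-equivalent states. The cleanest choice is to refine the block by the equivalence ``not separated by any returned witness''. Concretely, partition by the sign vector of $r(\partial_\cdot h(z))$ over all witnesses $(r,z)$ returned so far. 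Any two $\sim$-equivalent states agree on every such sign, so they stay together.
\item \emph{Right-stability splits.} These separate $p,q$ only if $\delta(p,a)$ and $\delta(q,a)$ lie in different blocks. By the induction hypothesis this gives $\delta(p,a)\not\sim\delta(q,a)$, and right-stability of $\sim$ then gives $p\not\sim q$.
\end{enumerate}

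\emph{Step 2: the stable partition is not too coarse, so $\mathcal P\subseteq\sim$.} At termination, no block contains a pair that the procedure separates. By completeness, any two $p,q$ in one block admit no witness, which means $\sigma_K(\partial_p h(z))=\sigma_K(\partial_q h(z))$ for all $z$ (for every extreme ray $r$, the signs agree). The partition is also right-stable, so $p,q$ in one block forces $\delta(p,x),\delta(q,x)$ to lie in one block for every word $x$, by induction on $|x|$. Applying the $x=\epsilon$ statement to these successors gives agreement of $\sigma_K(\partial_{px}h(z))$ and $\sigma_K(\partial_{qx}h(z))$ for all $x,z$, that is, $p\sim q$.

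\emph{Termination.} Each iteration either splits or stops, and the reachable carrier is finite, so the procedure terminates.

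\emph{Main obstacle.} The delicate point is the informal wording ``split a block whenever a witness is returned'' in Step 1. A careless split, for example isolating $p$ from everything else, could separate $\sim$-equivalent states. I expect to fix this by fixing the split rule as refinement by the witnessed sign coordinates, and then checking that the procedure compares all pairs within each final block. The latter is what lets completeness be invoked pairwise in Step 2.
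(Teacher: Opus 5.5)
Your proof is correct and follows the paper's two-inclusion structure: splits, whether from direct witnesses or from right-context propagation, never separate $\equiv_K$-equivalent states, and at the fixed point completeness together with right-stability gives agreement of all residual profiles inside each block (you make this explicit by induction on $|x|$). Two points the paper leaves implicit are made precise in your version: the split rule, which you fix as refinement by the witnessed sign coordinates so that equivalent states are never separated, and the requirement that all pairs in each final block have actually been compared.
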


\begin{proof}
A split caused by the procedure is justified by a concrete continuation and extreme ray, so equivalent states are never separated. Right-stability splits are necessary because $\equiv_K$ is a right congruence. At a fixed point, no two states in the same block admit a separating future witness, and successor blocks agree. Hence all residual future covector languages agree inside each block. Conversely, every separation performed by the procedure corresponds either to a direct future witness or to a right-context propagation of such a witness.
\end{proof}

\subsection{Cost components}

There are four distinct costs. First, if the extreme rays are not supplied, converting a cone representation to $\mathcal E(K^\ast)$ has preprocessing cost $P_K(d,L)$ and may have output size exponential in $d$ \cite{avisfukuda1992,fukudaprodon1996,ziegler1995}. Second, future separation has cost $T_{\mathrm{sep}}$ per query; in general this may contain a hard accumulated-weight reachability problem. Third, after preprocessing, a naive refinement with $N$ reachable states and alphabet size $s$ uses at most $O(N^3T_{\mathrm{sep}}+sN^2)$ work: there are $O(N^2)$ state pairs per refinement round, at most $N-1$ proper split rounds in the naive implementation, and $O(sN^2)$ successor-consistency checks. A polynomial bound is obtained only for model classes with efficient future separation. Fourth, if shortest-path potentials or terminal-offset feasibility are required, the numerical projected residual carrier must still be built or the product graph must be used.

Thus the construction is fully algorithmic in bounded horizons, finite numerical residual quotients, and restricted residual-label models. In the unrestricted deterministic weighted setting it is a characterization plus a separation-based procedure.

\section{Numerical Certificate Layer}

We now recall the exact numerical layer to avoid conflation with the conic abstraction. For a complete extreme-ray representation $\mathcal E(K^\ast)$, cone inclusion is equivalent to nonnegativity of $r(h(w))$ for every extreme ray $r$ and every word $w$. On a finite product graph, a scalar potential certificate for $r$ consists of a map $\Phi_r$ satisfying
\[
        \Phi_r(p')\le \Phi_r(p)+c_r(p,a)
\]
for each edge, together with terminal nonnegativity. This is the standard difference-constraint certificate.

The conic quotient $Q_K(h)$ preserves only signs of residual futures. The numerical carrier for potentials must preserve scalar magnitudes $r(\partial_u h(z))$. Therefore the numerical carrier refines $Q_K(h)$ in general; this is the central separation of the paper.

\begin{procedure}[Fallback exact numerical verification]
Input: a finite reachable weighted carrier $P_R=(P,\Sigma,\delta,g,\tau)$, extreme rays $\mathcal E(K^\ast)$, and the unresolved states or behaviours left after conic screening.

Output: for each extreme ray $r$, either a scalar counterexample path with negative accumulated value after terminal offsets, or a scalar potential certificate proving the corresponding projected difference constraints feasible on the checked product graph.

\begin{enumerate}[label=\arabic*.]
\item Project every vector edge and terminal offset by $r$, obtaining scalar weights $c_r(p,a)=r(g(p,a))$ and scalar terminal offsets $r(\tau(p))$.
\item Keep any conic negative-cell witness as an immediate scalar counterexample for its witnessing ray.
\item On the remaining unresolved part, run the exact scalar difference-constraint procedure, such as Bellman--Ford negative-cycle detection or an equivalent potential computation.
\item Accept the cone inclusion obligation only if every extreme ray returns a valid scalar potential certificate; otherwise return the first scalar counterexample.
\end{enumerate}
\end{procedure}

\begin{proposition}[Soundness of fallback workflow]
If Procedure~1 returns a negative-cell witness, then the corresponding accumulated residual is outside $K$. If Procedure~3 returns scalar potential certificates for every extreme dual ray on the unresolved product graph, then the checked scalarized cone obligations hold exactly on that graph. Conversely, conic nonnegativity alone is insufficient for this conclusion.
\end{proposition}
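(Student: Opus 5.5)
The proposition has three claims, and I would prove them separately in the order they are stated.

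\emph{Negative-cell witnesses.} By Procedure~1, a witness $(p,x,z,r_i)$ is recorded exactly when $r_i(\partial_{px}h(z))<0$, where $r_i\in\mathcal E(K^\ast)\subseteq K^\ast$. Every $y\in K$ satisfies $r_i(y)\ge0$ by the definition of $K^\ast$. So $\partial_{px}h(z)\notin K$, and $r_i$ is the Farkas separator. The only care needed is to confirm that the recorded vector is the accumulated residual itself (edge increments plus terminal offset difference), not an edge-local increment. This is guaranteed by steps~2 and~5 of the procedure.

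\emph{Potential certificates.} Fix an extreme ray $r$ and write $c_r(p,a)=r(g(p,a))$. Suppose $\Phi_r$ satisfies $\Phi_r(p')\le\Phi_r(p)+c_r(p,a)$ on every edge of the checked graph, together with terminal nonnegativity, read as $\Phi_r(q)+r(\tau(q))\ge \Phi_r(\text{start})$ in the appropriate normalization. Along any path $q_0\to\cdots\to q_k$ in that graph, telescoping gives
\[
\sum_i c_r(q_{i-1},a_i)\ \ge\ \Phi_r(q_k)-\Phi_r(q_0).
\]
Adding the terminal offsets and using terminal nonnegativity then shows that $r$ applied to the accumulated value is $\ge0$. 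This holds for every extreme ray. By the Farkas/Minkowski description $K=\{y: r(y)\ge0\ \forall r\in\mathcal E(K^\ast)\}$, which holds because $K$ is a closed rational polyhedral cone and hence $K=K^{\ast\ast}$, with $K^\ast$ generated by its extreme rays up to the lineality of $K^\ast$, every accumulated value along checked paths lies in $K$. I expect this step to be the main obstacle. The paper leaves the precise form of ``terminal nonnegativity'' implicit, so I would fix it explicitly as the inequality above. I would also restrict the conclusion to the checked product graph, as the statement does, rather than to all of $\Sigma^\ast$. If $K^\ast$ is not pointed, I would first invoke the Lineality reduction: pass to $\bar V$, where $\bar K$ is pointed, and apply the same argument there.

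\emph{Insufficiency of conic nonnegativity.} I would use a counterexample built on the Example with $K=\mathbb Q_{\ge0}$ and the positive-cell phenomenon. Take a carrier whose explored conic cells within a bounded horizon $H$ are all nonnegative, but whose scalar projection has a negative accumulated value at a longer horizon. For instance, a cycle that first accumulates $+1$ per step and later has an edge $-(H+2)$ reachable only after $H$ steps. Procedure~1 at horizon $H$ reports no negative witness. Bellman--Ford on the full graph, however, yields a negative counterexample, so no potential certificate exists. A second, horizon-free version of the same point: two carriers with identical conic quotients (sign-equivalent residuals, such as weights $1$ versus $2$ combined with a terminal offset $-3/2$) differ in scalar feasibility. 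Hence the conic quotient cannot determine the conclusion of the second claim.
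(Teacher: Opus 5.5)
Your first two parts follow the paper's proof. The paper simply cites the standard difference-constraint check; your telescoping argument with an explicit terminal-nonnegativity normalization is a correct elaboration of it.

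The gap is in the third part: neither counterexample works as described.

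In the first, Procedure~1 computes residuals from \emph{every} reachable state $p$. The edge of weight $-(H+2)$ leaves the reachable state $q_H$, so it is itself the one-step residual $\partial_{q_H}h(a)<0$ and is reported already at horizon $1$. This is not an accident of your construction. Terminal offsets telescope, so residuals satisfy the cocycle identity
\[
\partial_p h(z_1z_2)=\partial_p h(z_1)+\partial_{\delta(p,z_1)}h(z_2).
\]
Hence, for each ray $r$, if no one-step residual from a reachable state is $r$-negative, then no residual of any length is. A ``delayed negative residual'' example cannot exist.

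In the second example, residuals include the offset difference $\tau(q_k)-\tau(p_x)$. Suppose the offset $-3/2$ sits on the target of an edge of weight $w\in\{1,2\}$ leaving a start state with offset $0$. Then the residual from the start is $w-3/2$, i.e.\ $-1/2$ versus $+1/2$. So the conic quotients are \emph{not} identical, and the conic layer itself flags the violating carrier. If instead you put $-3/2$ on a single initial state with a self-loop, the residual signs do agree. But then $h(\epsilon)=-3/2$ violates the obligation for both carriers, so they do not differ in feasibility.

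The missing idea is what conic data is blind to. It records signs of differences $h(uxz)-h(ux)$ and never the base value $h(\epsilon)=\tau(q_0)$. By the cocycle identity, conic nonnegativity of all residuals makes every $r\circ h$ nondecreasing along prefixes. The cone obligation then holds exactly when $h(\epsilon)\in K$, and the conic quotient does not carry that information.

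A correct witness: take $K=\mathbb Q_{\ge0}$ and one state with an $a$-loop of weight $1$ and $\tau=-1$. Every residual equals $|z|\ge0$, so Procedure~1 never reports a witness and the conic quotient is a single nonnegative block. Yet $h(\epsilon)=-1<0$, and no potential certificate exists. Alternatively, argue at the level of the quotient, as the paper does. Conic merging identifies states with different scalar magnitudes (e.g.\ the $(1,2)$ and $(2,4)$ loops), so potentials cannot be read off the conic carrier.

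A smaller point: your patch for non-pointed $K^\ast$ does not work. By the Lineality reduction, $K^\ast\cong\bar K^\ast$, so passing to $V/L_K$ cannot make $K^\ast$ pointed. $K^\ast$ fails to be pointed exactly when $K$ is not full-dimensional. For example, a ray $K$ in $\mathbb Q^2$ has a half-plane dual with no extreme rays, and nonnegativity on $\mathcal E(K^\ast)$ then does not cut out $K$. Either adopt the paper's standing assumption of a complete extreme-ray representation, or add equality tests for $\operatorname{lin}(K^\ast)=(\operatorname{span}K)^\perp$.
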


\begin{proof}
A negative-cell witness has $r(x)<0$ for some $r\in K^\ast$, so $x\notin K$ by definition of the dual cone. For the unresolved part, Procedure~3 is the standard exact scalar difference-constraint check applied separately to each extreme ray. Since a vector lies in $K$ exactly when every extreme dual ray is nonnegative on it, the collection of scalar certificates proves the checked cone obligations. The final statement follows because the conic quotient records only signs, while potentials depend on scalar magnitudes and terminal offsets.
\end{proof}

The relationship can be summarized as the following refinement chain:
\[
\text{vector residual carrier}
\longrightarrow
\text{numerical projected carrier}
\longrightarrow
\text{conic covector carrier}.
\]
The arrows are quotient maps whenever the relevant carriers are finite. Moving right loses information. The first carrier preserves full vector futures; the second preserves the scalar magnitudes seen by each extreme ray and is suitable for potentials; the third preserves only signs and tightness cells. Consequently, conic merging is sound for qualitative obstruction profiles but not for numerical potential feasibility.

\section{Worked Example: Conic Quotient, Witness, and Fallback}

Let $K=K_\triangle$ with extreme dual rays $r^- =(-1,1)$ and $r^+=(1,1)$. Consider a small deterministic carrier whose two residual behaviours are
\[
        x_0=(1,2),
        \qquad
        x_1=(2,4).
\]
They are distinct as vectors and as numerical values. However,
\[
        r^-(x_0)=1,
        \quad r^+(x_0)=3,
        \qquad
        r^-(x_1)=2,
        \quad r^+(x_1)=6.
\]
Both lie in the same open conic cell. If every continuation preserves this proportional relation, the conic quotient has one state while the numerical residual quotient has two.

The full verification workflow on this example has four conceptual phases. First, the conic profile evaluates the two extreme dual rays on each residual vector. Both $x_0$ and $x_1$ have profile $(+,+)$, so the conic quotient merges them. Second, the qualitative screen searches the conic quotient for a reachable profile with a negative coordinate. In the proportional case none appears, so there is no cone-geometric obstruction. Third, numerical verification must still preserve scalar magnitudes such as $r^\pm(x_i)$ and run potential or distance checks if terminal offsets or path sums matter; $x_0$ and $x_1$ may then require different numerical potentials even though their covector type is the same. Finally, if $x_1$ is perturbed to $(4,2)$, then $r^-(x_1)=-2$ and the profile becomes $(-,+)$. This produces a conic negative-cell witness under the ray $r^-$ before any shortest-path potential is computed.

Thus the conic layer can expose a qualitative obstruction when a dual hyperplane is crossed, but it does not certify shortest-path potential feasibility in the merged positive-cell case. The numerical layer remains responsible for scalar magnitudes and terminal offsets. The example also shows why local edge-sign refinement is insufficient: the conic quotient is defined by signs of accumulated residual futures, not by signs of isolated edge increments.

\section{Scope and Limitations}

\subsection*{Determinism, finite words, and facial abstraction}

The optional spectral finite-index note is stated under a dominated conic spectrum assumption. Without such a condition, finite-dimensional linear representations can cross the extreme-dual-ray hyperplanes infinitely often, and the conic quotient may have infinite index. For deterministic finite automata, finite index is guaranteed by finiteness of the reachable presentation.

The results are finite-word and deterministic. Nondeterministic weighted inclusion inherits the classical undecidability and complexity barriers already present in scalar quantitative languages. Infinite-word mean-payoff objectives introduce limit operations that do not generally commute with vector cone observations; they require separate semantics and cannot be obtained by simply taking a limit of the finite-word covector quotient.

For infinite words, a cone condition may be imposed on a vector limit, on componentwise liminf values, or on every scalarized liminf $\liminf_n n^{-1}\sum_{i<n} r(g_i)$. These choices need not agree. Linear functionals commute with genuine Cesaro limits, but they do not generally commute with componentwise liminf or with oscillatory averages near a face of $K$. Consequently, the sign cell of finite prefixes may fail to stabilize, and the conic observation carrier does not automatically yield a mean-payoff certificate. A safe extension would have to fix scalarized mean-payoff semantics first and then rebuild the obstruction algebra for recurrent classes or ultimately periodic witnesses.

For nondeterministic systems, a possible relaxed direction is to use the conic facial algebra as an abstract interpretation domain over sets of cells. This would compute an over-approximation of possible obstruction profiles, prioritize extreme rays that may witness violation, and guide simulation-style inequalities. Such an abstraction would be sound for finding possible qualitative violations but not complete for language inclusion. Candidate tractable fragments include finite-valued or determinizable weighted automata, history-deterministic models, or settings where a quantitative simulation relation supplies a sound proof object. Developing these fragments is left for future work.

\section{Conclusion}

This paper developed cone-induced observation congruences for vector-valued quantitative languages. The construction starts from a rational polyhedral order cone, extracts the extreme-dual covector arrangement, restricts it to the residual span of the language, and forms the right-stable carrier of the resulting residual sign stratification. The carrier construction itself follows classical syntactic-congruence principles; the contribution is the systematic passage from cone geometry to residual observations and the refinement calculus it induces.

The main structural outputs are the conic observation quotient, the row view, the observable-covector correspondence, and the cone-refinement morphisms. These results organize how adding, deleting, or comparing dual covectors changes the induced quotient. The restricted covector configuration may be read as a realizable oriented-matroid sign configuration, while the numerical projected residual carrier remains responsible for magnitude-sensitive potential certificates.

Bounded-horizon quotients are computable by explicit enumeration of accumulated residual futures, and the evaluation section records how the sign abstraction behaves across several deterministic model families. The unbounded construction requires a separation procedure and is best read as a characterization unless such a procedure is available for the model class. The examples illustrate the intended use: conic cells expose qualitative Farkas obstructions and possible collapses before numerical verification is applied.

\section{Appendix: Optional Spectral Finite-Index Regime}

For finite-dimensional linear representations, finite index of $\equiv_K$ is not automatic: the reachable residual orbit may cross extreme-dual hyperplanes infinitely often. A compact diagnostic criterion is obtained by writing
\[
        h(w)=C M_w b,\qquad Y_u=C M_u,
\]
so that
\[
        h(uxz)-h(ux)=Y_uM_x(M_z-I)b.
\]
For each extreme ray $r$ and words $x,z$, the form
\[
        \ell_{r,x,z}(Y)=r\bigl(YM_x(M_z-I)b\bigr)
\]
cuts the row-residual space by a central hyperplane. The conic observation quotient has finite index exactly when the reachable set $\{C M_u:u\in\Sigma^\ast\}$ meets only finitely many cells of this infinite arrangement.

A sufficient spectral regime can be stated under a dominated peripheral/stable decomposition with a positive conic margin away from all nonzero peripheral faces and a separate finite certificate for zero-peripheral faces. In that case long continuations have signs determined by the finite peripheral phase, while short continuations contribute only a bounded prefix factor. The full statement and proof details are included in the ancillary note \texttt{anc/conic-carrier-lab/notes/spectral\_note.md}.

\end{document}